\def\dim{\mathop{\mathrm{dim}}\nolimits}
\def\tr{\mathop{\mathrm{tr}}\nolimits}
\def\Im{\mathop{\mathrm{Im}}\nolimits}
\def\Ker{\mathop{\mathrm{Ker}}\nolimits}
\def\id{\mathop{\mathrm{id}}\nolimits}
\def\diag{\mathop{\mathrm{diag}}\nolimits}
\def\Vol{\mathop{\mathrm{vol}}\nolimits}
\def\i{\sqrt{-1}}
\def\E{\mathbb{E}}
\def\P{\mathbb{P}}
\def\R{\mathbb{R}}
\def\Z{\mathbb{Z}}
\def\CA{\mathcal{A}}
\def\CC{\mathcal{C}}
\def\CE{\mathcal{E}}
\def\CF{\mathcal{F}}
\def\CM{\mathcal{M}}
\def\ww{w}
\def\d{\partial}
\def\lap{\mathcal{L}}
\numberwithin{equation}{section}
\newtheorem{thm}[equation]{Theorem}
\newtheorem{dfn}[equation]{Definition}
\newtheorem{prop}[equation]{Proposition}
\newtheorem{cor}[equation]{Corollary}
\newtheorem{lem}[equation]{Lemma}
\newtheorem{rem}[equation]{Remark}
\numberwithin{table}{section}
\numberwithin{figure}{section}
\title{Geometric Formulation for Discrete Points \\ and its Applications}
\author{%
  Yuuya Takayama \\
 Nikon Corporation, Japan \\
 Nikon Research Corporation of America, USA \\
 \texttt{yuuya.takayama@nikon.com} \\
}
\date{\empty}
\begin{document}
\maketitle

\begin{abstract}

We introduce a novel formulation for geometry on discrete points.
It is based on \emph{a universal differential calculus}, which gives a geometric description of a discrete set by the algebra of functions.
We expand this mathematical framework so that it is consistent with differential geometry, and works on spectral graph theory and random walks.
Consequently, our formulation comprehensively demonstrates many discrete frameworks in probability theory, physics, applied harmonic analysis, and machine learning.
Our approach would suggest the existence of an intrinsic theory and a unified picture of those discrete frameworks.

\end{abstract}

\section{Introduction}

Mathematical approaches play an essential role in understanding of practical harmonic techniques.
Though differential geometry has contributed to the theoretical studies of the Laplacian,
it does not work on discrete points, such as data.
In order to formulate its discrete analogue on a set of points, we focus on \emph{a universal differential calculus} \cite{dimakis1999discrete, grigor2015cohomology}, which has an advantage to define the exterior derivative without any additional assumption on points, likesuch as continuous models or graphs.
Since it is also possible to extend it to define the (discrete) Laplacian, this framework is naturally expected to provide a unified view among Laplacian-based algorithms in applied harmonic analysis and machine learning.
Therefore, in this paper, we aim to construct a general formulation to enable differential geometry to work on discrete points with the help of a universal differential calculus, and then study how it shows geometric relationship of frameworks in applied harmonic analysis, machine learning, and so on.

In order to build a general setting, we start from defining a differential $1$-form, a measure on functions, an inner product on $1$-forms, and the Dirichlet energy over a set of discrete points, which is regarded as a manifold.
Then, the Laplacian is immediately given as the Laplace-Beltrami operator.
It is worth emphasizing that this Laplacian is compatible with that given in spectral graph theory \cite{chung1997spectral, nica2016brief} and random walks \cite{kumagai2014random, barlow2017random}.
Finally, we define the Fourier transform and the curvature vector of an embedding, which characterize geometric aspects of points.
In summary, our formulation for differential geometry on discrete points consists of those in Table \ref{table:corresp}.

\begin{table}
  \caption{dictionary of formulation}
  \centering
  \begin{tabular}{l|l|c}
  \toprule
 manifold & discrete set $V$ & detail \\
  \midrule
 set of function & $\CA = \{f \colon V\rightarrow \R\}$ & \S \ref{subsec:universal} \\
 exterior derivative & $\d \colon \CA \rightarrow \CA\otimes \CA$ & Dfn. \ref{dfn:calculus} \\
 differential $1$-form & $\Omega_\CA^1 = \d \CA \subset \CA\otimes \CA$ & Dfn. \ref{dfn:calculus} \\
 integral of function & $\int_V \cdot d \mu\colon \CA \rightarrow \R$ & \eqref{eq:integral} \\
 inner product on functions & $\langle \cdot, \cdot \rangle_\CA\colon \CA\times\CA \rightarrow \R$ & \eqref{eq:innerproduct} \\
 inner product on $1$-forms & $\langle \cdot, \cdot \rangle_{\Omega_\CA^1}\colon \Omega_\CA^1\times \Omega_\CA^1\rightarrow \R$ & \eqref{eq:metric} \\
 Dirichlet energy & $\CE(\cdot, \cdot) \colon \CA\times \CA\rightarrow \R $ & \eqref{eq:energy} \\
 Laplacian for function & $\lap = \d^*\circ\d\colon \CA\rightarrow \CA$ & \eqref{eq:laplacian} \\
 Fourier transform & $\CF\colon \CA \rightarrow \R^{|V|}$ & \eqref{eq:fourier} \\
 embedding & $\vec{r}_d \colon V\rightarrow \R^d$ & \eqref{eq:embedding} \\
 curvature vector & $\overrightarrow{H}_{d} = - \lap \vec{r}_d \in \CA^{\oplus d}$ & \eqref{eq:curvature} \\
  \bottomrule
  \end{tabular}\label{table:corresp}
\end{table}

To show advantages of this formulation, we demonstrate three types of applications.
First, we study a graph based frameworks; spectral graph theory and random walks.
There, we review useful techniques for other applications to verify compatibility between our setting and theirs.
Second, we figure out geometric aspects of principal component analysis and classical many-body physics.
Though these frameworks are usually not explained in geometric contexts, a covariance and a force are interpreted as the Dirichlet energy and the curvature vector respectively.
Third, we understand practical applications, signal processing and manifold leaning, in applied harmonic analysis and machine learning by their relations with other frameworks.

This paper is organized as follows:
In \S \ref{sec:geom}, we explain the way to construct discrete differential geometry as in Table \ref{table:corresp}.
Since this section is discussed in an abstract manner, we summarize main concepts by matrix description in \S \ref{sec:matrix2} for the sake of the reader.
In \S \ref{sec:spectral} and \S \ref{sec:random}, we review some results from spectral graph theory and random walks, in \S \ref{sec:pca} and \S \ref{sec:physics}, we explain geometric viewpoints in principal component analysis and physics, and last we study signal processing and manifold learning in \S \ref{sec:signal} and \S \ref{sec:machinelearning} respectively.

\section{Differential geometry on discrete points}
\label{sec:geom}
In this section, we review a universal differential calculus, and then define differential geometry on a set of discrete points.

In \S \ref{subsec:universal}, we check algebraic aspects of a set of functions over discrete points.
Then, we build a geometric setting in \S \ref{subsec:metric}.
The Laplacian, the Fourier transform, and the curvature vector are introduced in \S \ref{subsec:laplace}, \S \ref{subsec:fourier} and \ref{subsec:curvature} respectively.
Their matrix description is explained in \S \ref{sec:matrix2}.

\subsection{universal differential calculus}
\label{subsec:universal}
We recall algebraic structures on functions to make sure the definition of a universal differential calculus.
See also \cite{dimakis1994discrete, grigor2015cohomology} for reference.
 
Let $V$ be a finite set.
Without loss of generality, we can assume $V = \{ 1, 2, \cdots, n\}$.
The set of functions $\{f\colon V \rightarrow \R\}$ is denoted by $\CA$, which is an $\R$-vector space in a standard manner.
%When we need to emphasize the base space $V$, we refer to $\CA$ as $\CA_V$.
It is useful to take its basis $\{e_i \in \CA\}_{i\in V}$ as $e_i(x) = \delta_{ix}$, where $\delta_{ij}$ is Kronecker's delta.
Define a product $\tilde{\sigma} \colon \CA \times\CA \rightarrow \CA$ as pointwise:
\begin{align}
\tilde{\sigma}(f, g)(x) = (f \cdot g)(x):= f(x) \cdot g(x).
\label{eq:product}
\end{align}
By bilinearity, $\tilde{\sigma}$ decomposes into two maps $\iota \colon \CA \times \CA \rightarrow \CA\otimes \CA$ and $\sigma \colon \CA \otimes \CA \rightarrow \CA$ which satisfy $\tilde{\sigma} = \sigma \circ \iota$.
Here, the tensor product $\otimes$ is over $\R$, and $\CA\otimes\CA$ is regarded as functions on $V\times V$ by $(f\otimes g)(x,y)=f(x)\cdot g(y)$.
It is easy to see these maps $\iota$ and $\sigma$ are given as
\begin{align*}
\iota(e_i, e_j) = e_i\otimes e_j, \ \sigma(e_i\otimes e_j) = \delta_{ij}e_i.
\end{align*}
We set $1_\CA$ as the constant function taking a value $1$, which is written as $1_\CA=\sum_{i\in V} e_i$.
The equation $f\cdot 1_\CA = f = 1_\CA\cdot f$ follows from the definition \eqref{eq:product}, or is checked by $e_i\cdot e_j = \delta_{ij}e_i$.
Then, we define left and right actions on $\CA\otimes\CA$ by $h\cdot(f\otimes g) := (h\cdot f)\otimes g$ and $(f\otimes g) \cdot h:= f\otimes (g\cdot h)$ respectively.
The next proposition follows:
\begin{prop}
$\CA$ is an $\R$-algebra with the product $\tilde{\sigma}$ and the unity $1_\CA$.
$\CA\otimes\CA$ is an $\CA$-bimodule.
\label{prop:bimodule}
\end{prop}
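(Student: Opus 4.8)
The plan is to verify the two claims of Proposition~\ref{prop:bimodule} directly from the definitions given above, checking the algebra axioms for $(\CA, \tilde\sigma, 1_\CA)$ and then the bimodule axioms for $\CA\otimes\CA$. Since everything is $\R$-bilinear by construction, I would reduce every identity to a statement about the basis elements $e_i$ and use the single computational fact $e_i\cdot e_j = \delta_{ij}e_i$, which follows immediately from \eqref{eq:product} since $e_i(x)e_j(x)=\delta_{ix}\delta_{jx}=\delta_{ij}\delta_{ix}=\delta_{ij}e_i(x)$.

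First I would establish that $\CA$ is an $\R$-algebra. The map $\tilde\sigma$ is $\R$-bilinear because pointwise multiplication is, and it equips $\CA$ with a ring structure whose compatibility with the $\R$-action is inherited pointwise from $\R$. For associativity and commutativity it suffices, by bilinearity, to check them on basis elements: $(e_i\cdot e_j)\cdot e_k = \delta_{ij}\delta_{ik}e_i = e_i\cdot(e_j\cdot e_k)$ and $e_i\cdot e_j=\delta_{ij}e_i=e_j\cdot e_i$. For the unit, I would use the expansion $1_\CA=\sum_{k\in V}e_k$ already noted, so that $1_\CA\cdot e_i=\sum_k \delta_{ki}e_k=e_i$, and conclude $1_\CA\cdot f = f = f\cdot 1_\CA$ for all $f\in\CA$ by linearity; this is also recorded in the text as a direct consequence of \eqref{eq:product}.

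Next I would show $\CA\otimes\CA$ is an $\CA$-bimodule under the left and right actions $h\cdot(f\otimes g)=(h\cdot f)\otimes g$ and $(f\otimes g)\cdot h=f\otimes(g\cdot h)$. These are well defined because they factor through the balanced structure of the tensor product, and each is $\R$-bilinear. I would verify the four module axioms on the generators $e_i\otimes e_j$: compatibility of the left action with the algebra product, $(h\cdot h')\cdot(e_i\otimes e_j)=h\cdot(h'\cdot(e_i\otimes e_j))$, reduces to associativity of $\tilde\sigma$ in the first slot; the analogous right-action identity reduces to associativity in the second slot; unitality follows from $1_\CA\cdot f=f=f\cdot 1_\CA$ established above; and additivity in each argument is exactly the $\R$-bilinearity just noted. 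The crucial bimodule compatibility, that the left and right actions commute, is immediate here because they act on disjoint tensor factors: $(h\cdot(f\otimes g))\cdot h'=(h\cdot f)\otimes(g\cdot h')=h\cdot((f\otimes g)\cdot h')$.

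I do not expect a genuine obstacle, since the statement is essentially a bookkeeping verification of standard axioms; the only points requiring slight care are confirming that the actions descend to the tensor product (i.e.\ are well defined on $\CA\otimes\CA$ rather than merely on $\CA\times\CA$) and that the unit axiom is used consistently with the decomposition $1_\CA=\sum_{i\in V}e_i$. Both are handled by the bilinearity of $\tilde\sigma$ together with the commutativity of $\CA$, which makes the left and right unit actions agree. The heaviest lifting is purely the associativity check $e_i\cdot e_j=\delta_{ij}e_i$, which collapses all higher products to a single Kronecker delta and thereby trivializes the remaining computations.
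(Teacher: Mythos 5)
Your proof is correct and follows exactly the route the paper intends: the paper states Proposition~\ref{prop:bimodule} without proof, as an immediate consequence of the preceding setup (the identity $e_i\cdot e_j=\delta_{ij}e_i$, the unit law $f\cdot 1_\CA=f=1_\CA\cdot f$, and the defined left and right actions), and your verification simply fills in those routine axiom checks on basis elements. The only cosmetic remark is that commutativity of $\CA$ is not needed for the bimodule structure, so invoking it for the unit actions is harmless but superfluous.
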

Now, we introduce a universal differential calculus.
\begin{dfn}[\cite{grigor2015cohomology}]
\label{dfn:calculus}
For $f=\sum_{i\in V} f_ie_i\in \CA$, define \emph{a differential map} $\d\colon \CA \rightarrow \CA\otimes\CA$ by
\begin{align*}
\d f:=1_\CA\otimes f - f \otimes 1_\CA =\sum_{i,j\in V}(f_j-f_i)e_i\otimes e_j,
\end{align*}
and $\Omega^1_\CA \subset \CA\otimes\CA$ as the minimal left $\CA$-submodule of $\CA\otimes\CA$ containing $\d\CA$.
The pair $(\Omega^1_\CA, \d)$ is called \emph{the universal first order differential calculus} on $\CA$. 
\end{dfn}

\begin{lem} 
$\Omega^1_\CA$ is an $\CA$-bimodule.
\end{lem}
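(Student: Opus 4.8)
The plan is to exploit the defining property of $\Omega^1_\CA$ as the minimal \emph{left} $\CA$-submodule of $\CA\otimes\CA$ containing $\d\CA$. Since $\CA\otimes\CA$ is already an $\CA$-bimodule by Proposition \ref{prop:bimodule}, and $\Omega^1_\CA$ is by construction closed under the left action, the only thing that remains is to verify that $\Omega^1_\CA$ is also stable under the right action $(f\otimes g)\cdot h=f\otimes(g\cdot h)$. Once this is shown, the compatibility of the two actions is inherited from $\CA\otimes\CA$, so $\Omega^1_\CA$ becomes an $\CA$-bimodule automatically.

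First I would describe a general element of $\Omega^1_\CA$. By minimality it is a finite left-linear combination $\omega=\sum_k a_k\cdot\d f_k$ with $a_k,f_k\in\CA$. Using the bimodule compatibility in $\CA\otimes\CA$, the right action distributes as $\omega\cdot h=\sum_k a_k\cdot\bigl((\d f_k)\cdot h\bigr)$, so it is enough to prove that $(\d f)\cdot h\in\Omega^1_\CA$ for all $f,h\in\CA$; the outer left multiplications by the $a_k$ then stay inside $\Omega^1_\CA$ precisely because it is a left submodule.

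The key step — and the one I expect to carry the whole argument — is the Leibniz rule
$$\d(fg)=(\d f)\cdot g+f\cdot(\d g),$$
which is a direct computation from Definition \ref{dfn:calculus}: expanding $(\d f)\cdot g=1_\CA\otimes(fg)-f\otimes g$ and $f\cdot(\d g)=f\otimes g-(fg)\otimes 1_\CA$ and adding makes the cross term $f\otimes g$ cancel, leaving $1_\CA\otimes(fg)-(fg)\otimes 1_\CA=\d(fg)$. Rearranging gives
$$(\d f)\cdot g=\d(fg)-f\cdot(\d g).$$
The right-hand side is manifestly an element of $\Omega^1_\CA$: the term $\d(fg)$ lies in $\d\CA\subset\Omega^1_\CA$, and $f\cdot(\d g)$ lies in $\Omega^1_\CA$ because it is a left action applied to $\d g\in\d\CA$. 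This is exactly the identity that converts a right multiplication into differentials together with left multiplications, which is why no genuine difficulty arises beyond this computation.

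Finally I would assemble the pieces: for arbitrary $\omega=\sum_k a_k\cdot\d f_k$ and $h\in\CA$ we obtain $\omega\cdot h=\sum_k a_k\cdot\bigl(\d(f_k h)-f_k\cdot(\d h)\bigr)$, a finite left-linear combination of elements of $\Omega^1_\CA$, hence itself in $\Omega^1_\CA$. Thus $\Omega^1_\CA$ is closed under the right action, and together with its defining left-module structure and the action-compatibility inherited from $\CA\otimes\CA$ it is an $\CA$-bimodule.
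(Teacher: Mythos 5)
Your proof is correct and follows essentially the same route as the paper: both hinge on the Leibniz rule $\d(f\cdot g)=\d f\cdot g+f\cdot\d g$, which rewrites the right action $(\d f)\cdot g$ as $\d(fg)-f\cdot(\d g)\in\Omega^1_\CA$. You additionally spell out the extension from generators $\d f$ to arbitrary elements $\sum_k a_k\cdot\d f_k$, a step the paper leaves implicit, but the key idea is identical.
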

\begin{proof}
We can see the Leibniz rule holds:
\begin{align*}
\d(f\cdot g) &= 1_\CA\otimes (f\cdot g) - (f\cdot g) \otimes 1_\CA \\
&= (1_\CA\otimes f) \cdot g - f \otimes g + f \otimes g  -  (f\cdot g) \otimes 1_\CA =\d f\cdot g + f\cdot \d g.
\end{align*}
Hence, the element $\d f\cdot g$ produced by the right action belongs to $\Omega^1_\CA$. 
\end{proof}

The differential map $\d$ can be defined on the higher tensor spaces in a similar way to the exterior derivative on manifolds \cite[\S 2]{grigor2015cohomology}.
Hence, we refer to an element of $\Omega^1_\CA$ as \emph{a $1$-form}.

\begin{lem} 
$\CA\otimes\CA$ is isomorphic to $\Omega^1_\CA\oplus \CA$ as $\CA$-bimodules.
\label{lem:isom}
\end{lem}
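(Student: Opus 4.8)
The plan is to realize the claimed isomorphism as a split short exact sequence of $\CA$-bimodules
\[
0 \rightarrow \Omega^1_\CA \hookrightarrow \CA\otimes\CA \xrightarrow{\ \sigma\ } \CA \rightarrow 0,
\]
where $\sigma$ is the multiplication map of Proposition \ref{prop:bimodule}. First I would observe that $\sigma$ is a homomorphism of $\CA$-bimodules: under the identification of $\CA\otimes\CA$ with functions on $V\times V$, one has $\sigma(F)(x)=F(x,x)$, so $\sigma$ is restriction to the diagonal, and it intertwines the left and right actions $h\cdot(f\otimes g)=(hf)\otimes g$ and $(f\otimes g)\cdot h=f\otimes(gh)$. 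It is surjective since $\sigma(e_i\otimes e_i)=e_i$, and its kernel is exactly the functions vanishing on the diagonal, i.e. $\Ker\sigma=\operatorname{span}\{e_i\otimes e_j: i\neq j\}$.

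The crux of the argument is to identify $\Omega^1_\CA=\Ker\sigma$. The inclusion $\Omega^1_\CA\subseteq\Ker\sigma$ is immediate: a direct computation gives $\sigma(\d f)=\sigma(1_\CA\otimes f)-\sigma(f\otimes 1_\CA)=f-f=0$, so $\d\CA\subseteq\Ker\sigma$, and since $\Ker\sigma$ is a left $\CA$-submodule while $\Omega^1_\CA$ is the \emph{minimal} such submodule containing $\d\CA$, the inclusion follows. For the reverse inclusion I would produce every basis element $e_i\otimes e_j$ with $i\neq j$ inside $\Omega^1_\CA$ by left-multiplying a differential. Using $\d e_j=\sum_l e_l\otimes e_j-\sum_k e_j\otimes e_k$ and $e_i\cdot(e_l\otimes e_j)=\delta_{il}\,e_i\otimes e_j$, one finds for $i\neq j$ that $e_i\cdot\d e_j=e_i\otimes e_j$, since the second sum is killed by $e_ie_j=0$. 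Hence all off-diagonal basis vectors lie in $\Omega^1_\CA$, giving $\Ker\sigma\subseteq\Omega^1_\CA$ and therefore equality.

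Finally I would exhibit an explicit bimodule splitting $s\colon\CA\rightarrow\CA\otimes\CA$, namely the diagonal embedding $s(f)=\sum_{i\in V} f_i\,e_i\otimes e_i$. A short check shows $s$ is a homomorphism of bimodules (using $h\cdot e_i=h_ie_i=e_i\cdot h$) and that $\sigma\circ s=\id_\CA$. Consequently $s$ is injective and $\CA\otimes\CA=\Ker\sigma\oplus\Im s$, which by the identification above reads $\CA\otimes\CA=\Omega^1_\CA\oplus s(\CA)$ with $s(\CA)\cong\CA$ as bimodules, proving the lemma. I expect the genuinely nontrivial step to be the identification $\Omega^1_\CA=\Ker\sigma$, specifically the generation argument $e_i\cdot\d e_j=e_i\otimes e_j$; once this is in hand, the construction of $\sigma$ and its splitting $s$ is routine and purely formal.
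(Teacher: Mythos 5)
Your proof is correct and follows essentially the same route as the paper's: both identify $\Omega^1_\CA$ with $\Ker\sigma$ (the span of the off-diagonal tensors $e_i\otimes e_j$, $i\neq j$) via the key computation $e_i\cdot \d e_j=e_i\otimes e_j$, and then split $\CA\otimes\CA$ along $\sigma$. Your explicit diagonal splitting $s$ and the minimality argument for the inclusion $\Omega^1_\CA\subseteq\Ker\sigma$ merely spell out details that the paper's terser proof leaves implicit.
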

\begin{proof}
Notice that $e_i\cdot \d e_j = e_i\otimes e_j$ for $i\neq j$, $e_i\cdot \d e_i = -\sum_{j\in V \setminus i}e_i\otimes e_j$.
Thus, $\Omega^1_\CA$ is spanned by a basis $\{e_i\otimes e_j\mid i\neq j\}$ in $\CA\otimes \CA$.
The linear map $\sigma\colon e_i\otimes e_j \mapsto \delta_{ij}e_i$ means $\Ker \sigma=\Omega^1_\CA$,
and then we have $\CA\otimes\CA \cong \Ker \sigma \oplus\Im \sigma = \Omega^1_\CA\oplus\CA$.
\end{proof}

\subsection{measure and metric}
\label{subsec:metric}
Let $\mu$ be a measure on $V$, namely, $\mu \in \CA$ and $\mu_x:=\mu(x) > 0$ for any $x \in V$.
We define an integral on $\CA$ with respect to the measure and an inner product $\langle \cdot, \cdot \rangle_\CA\colon \CA\times\CA \rightarrow\R$ for $f=\sum f_ie_i, g=\sum g_ie_i \in \CA$:
\begin{align}
\int_V f(x)d\mu_x&:= \sum_{x\in V} \sum_{i\in V} f_i e_i(x) \mu(x) = \sum_{i\in V} f_i\mu_i, \label{eq:integral}\\
\langle f, g \rangle_\CA&:=\int_V f(x)g(x) d\mu_x = \sum_{i \in V} f_ig_i\mu_i. \label{eq:innerproduct}
\end{align}
As usual, the corresponding norm $\langle f,f\rangle_{\CA}^{1/2}$ is denoted by $\|f\|_{\CA}$, and the volume of $A\subset V$ is given by $\Vol(A) := \int_A 1_{\CA}(x) d\mu_x = \sum_{i\in A} \mu_i$.
Put a mean of $f\in \CA$ as $m_f:= \Vol(V)^{-1}\int_V f(x)d\mu_x \in \R$.
Note that the evaluation operator is represented in several ways:
\begin{align}
f_i = f(i) = e_i\cdot f = \left\langle \mu_i^{-1}e_i, f\right\rangle_{\CA}.
\label{eq:eval}
\end{align}

Let us consider an inner product on 1-forms given as a symmetric bilinear map $\langle \cdot, \cdot \rangle_{\Omega^1_\CA} \colon \Omega^1_\CA$ $\times\Omega^1_\CA \rightarrow\R$.
In this paper, we define it by
\begin{align}
\left\langle e_i\otimes e_j, e_k\otimes e_l \right\rangle_{\Omega^1_\CA} := \delta_{ik}\delta_{jl}\ww_{ij}, \label{eq:metric}
\end{align}
with $\ww_{ij} \geq 0$ which satisfies $\ww_{ij} = \ww_{ji}$ for $i, j \in V, i\neq j$.
For simplicity, we put $\ww_{ii}=0$ for $i\in V$.
This inner product satisfies the property; $\left\langle fug, v \right\rangle_{\Omega^1_\CA} = \left\langle u, fvg \right\rangle_{\Omega^1_\CA}$ for $f, g \in \CA$.
For $u = \sum_{i\neq j} u_{ij}e_i\otimes e_j, v = \sum_{i\neq j} v_{ij}e_i\otimes e_j$, we have
\begin{align*}
\langle u, v\rangle_{\Omega^1_\CA} =\sum_{i,j \in V, i\neq j} \ww_{ij}u_{ij}v_{ij}.
\end{align*}
Define a degree of the inner product as $\deg(i) := \| e_i\cdot \d e_i\|^2_{\Omega^1_\CA}$, which is often employed as a measure $\mu_i = \deg(i)$.
Since $e_i\cdot \d e_i = e_i\otimes e_i-e_i\otimes 1_{\CA}$, we get $\deg(i) = \sum_{j\in V}\ww_{ij}$.
For $f, g \in \CA$, 
\begin{align}
\CE(f,g):= \frac{1}{2}\langle \d f, \d g\rangle_{\Omega^1_\CA} = \frac{1}{2}\sum_{i,j \in V} \ww_{ij}(f_i-f_j)(g_i-g_j)
\label{eq:energy}
\end{align}
is called \emph{the Dirichlet energy} with respect to $f$ and $g$.
It is easy to check
\begin{align}
\CE(e_i,e_j) = \begin{cases}
\deg(i) & \text{for }i = j \\
-\ww_{ij} & \text{for } i\neq j \end{cases}
\label{eq:energy-weight}
\end{align}
\begin{rem}
The above inner product can be defined through a metric $\left( \cdot, \cdot \right)_{\Omega^1_\CA} \colon \Omega^1_\CA\times\Omega^1_\CA \rightarrow \CA\otimes \CA$:
\begin{align*}
\left( e_i\otimes e_j, e_k\otimes e_l \right)_{\Omega^1_\CA} := \delta_{ik}\delta_{jl}\frac{\ww_{ij}}{\mu_i \mu_j} e_i\otimes e_j,
\end{align*}
which preserves the $\CA$-bimodule structures $\left( fug, v \right)_{\Omega^1_\CA} = f \left( u, v \right)_{\Omega^1_\CA} g = \left( u, fvg \right)_{\Omega^1_\CA}$. 
By integrating it over $V\times V$, we have the inner product.
Moreover, its integration over $V$ corresponds to a dual Riemann metric $g^*\colon \Omega^1_M\times \Omega^1_M \rightarrow \CC(M)$ in differential geometry.
\end{rem}

Sometimes, it is useful to consider another basis $\{\tilde{e}_i\}:=\{e_i/\sqrt{\mu_i}\}$, which is an orthonormal basis on $(\CA, \langle \cdot, \cdot \rangle_\CA)$.
By this basis, we can represent $f\in \CA$ as
\begin{align}
f = \sum_{i\in V} \tilde{f}_i \tilde{e}_i, \text{ where } \tilde{f}_i = \langle f, \tilde{e}_i \rangle_\CA = f_i\sqrt{\mu_i}.
\label{eq:basis}
\end{align}

In general, $V$ is regarded as a set of vertices in \emph{an oriented graph} and $\ww_{ij}$ as a weight on the oriented edge $(i,j)$.
Here, we can ignore the orientation because of the condition $\ww_{ji}=\ww_{ij}$.
In this sense, $(\CA, \langle \cdot, \cdot \rangle_\CA)$ and $(\Omega^1_\CA, \langle \cdot, \cdot \rangle_{\Omega^1_\CA})$ are Hilbert spaces on the vertices and the edges respectively \cite{hein2007graph}.
When $\ww_{ij}=0$, the edge $(i,j)$ is viewed as disconnected.
If there does not exist non-empty proper subset $V' \subset V$ which satisfies $\ww_{ij}=0$ and $\ww_{ji}=0$ for all $i \in V'$ and $j \in V\setminus V'$, the graph is called connected.
\begin{rem}
The original universal differential calculus refers to a disconnected edge $(i,j)$ as a \emph{non-allowed} element $e_i\otimes e_j$, and then realizes a non complete graph as a quotient algebra of $\Omega_{\CA}^1$ by the ideal generated by non-allowed elements \emph{\cite[\S 4]{grigor2015cohomology}}.
This construction seems to describe a topology of a graph, contrary, ours focus on its metric structure.
\end{rem}
According to this convention, we often refer to $\{\ww_{ij}\}$ as (graph) weights and $(\CA, \langle \cdot, \cdot \rangle_\CA,$ $ \Omega^1_\CA, \langle \cdot, \cdot \rangle_{\Omega^1_\CA})$ as a graph.

\subsection{Laplace operator}
\label{subsec:laplace}
With the inner products given in \ref{subsec:metric}, define a co-differential $\d^*\colon \Omega^1_\CA\rightarrow\CA$ to satisfy $\langle \d^*u,f\rangle_\CA = \langle u, \d f \rangle_{\Omega^1_\CA}$ for any $u \in \Omega^1_\CA$ and $f\in \CA$.
Then, \emph{the Laplacian} $\lap\colon \CA\rightarrow \CA$ is defined by $\frac{1}{2}\d^*\circ \d$, in the same way as the Laplace-Beltrami operator in differential geometry.
Since 
\begin{align*}
\langle u, \d f\rangle_{\Omega^1_\CA} = \sum_{i,j\in V, i\neq j}\ww_{ij}u_{ij}(f_j-f_i) = \sum_{i,j \in V,i\neq j} \ww_{ij}(u_{ji} - u_{ij})f_i,
\end{align*}
we obtain
\begin{align*}
\d^*u = \sum_{i\in V} \frac{1}{\mu_i} \Bigg\{ \sum_{j \in V\setminus i} \ww_{ij}(u_{ji} - u_{ij}) \Bigg\}e_i.
\end{align*}
Thereby, the Laplacian $\lap=\frac{1}{2}\d^*\circ \d$ is represented as
\begin{align}
\lap f = \sum_{i,j\in V}\frac{\ww_{ij}}{\mu_i}(f_i - f_j)e_i  = \sum_{i\in V}\frac{\deg(i)}{\mu_i}f_ie_i - \sum_{i,j\in V}\frac{\ww_{ij}}{\mu_i}f_je_i. \label{eq:laplacian}
\end{align}
This is also known as \emph{the graph Laplacian}, as explained in \S \ref{sec:matrix2}.
By definition, we have $\CE(f,g)= \langle f, \lap g\rangle_\CA$ and the above representation follows from \eqref{eq:energy-weight} as well.
When the corresponding graph is connected, the Dirichlet energy $\CE(f,f)$ takes the minimum value $0$ if and only if $f$ is a constant function.
Since $\lap$ is self-adjoint, we can take eigenfunctions $\{v_i\}$ as follows:
\begin{align}
\lap v_i = \rho_i v_i, \ 0 = \rho_1 \leq \rho_2 \leq \cdots \leq \rho_n, \ \langle v_i,v_j \rangle_\CA = \delta_{ij}.
\label{eq:eigenfunctions}
\end{align}
Here, we see $v_1 = \Vol(V)^{-1/2}1_\CA$ and $v_i \perp 1_{\CA}$ for $i \geq 2$.
%If the non-negative condition $\ww_{ij} \geq 0$ is omitted, $\rho_i$ is allowed to take a negative value.

\subsection{Fourier analysis}
\label{subsec:fourier}
In the continuous setting, the Fourier transform is given by $\CF[f](\xi) = \int f(x)e^{\i\xi x}dx$, and $e^{\i\xi x}$ is an eigenfunction of the 1-dimensional Laplacian, $-\frac{d^2}{dx^2} e^{\i\xi x} = \xi^2 e^{\i\xi x}$.

On the analogy, in the graph setting, it is natural to use the eigenfunctions $\{v_i\}$ of the Laplacian $\lap$, instead of $e^{\i\xi x}$, and 
define $\CF[f]_i := \langle f, v_i \rangle_\CA \in \R$ for $f \in \CA$.
The transformation $\CF\colon \CA \ni f \mapsto \CF[f] \in \R^n$ is known as \emph{the graph Fourier transform} \cite{hammond2011wavelets}.
We call $\CF[f]_i$ the $i$-th Fourier coefficient or the $i$-th frequency.
The corresponding inverse Fourier transform is given by
\begin{align}
f = \sum_{i=1}^n \CF[f]_i v_i = \sum_{i=1}^n \langle f, v_i \rangle_\CA v_i,
\label{eq:fourier}
\end{align}
which is just the eigenfunction expansion by $\{ v_i \}$.
It is easy to see \emph{Parseval's identity} holds:
\begin{align*}
\langle f, g \rangle_{\CA} = \sum_{i=1}^n \langle f, v_i \rangle_\CA \langle g, v_i \rangle_\CA= \langle \CF[f], \CF[g] \rangle_{\R^n}.
%\label{eq:parseval}
\end{align*}
This is valid for other expansions by orthogonal functions, such as \eqref{eq:basis}.
Sometimes, the convolution operator $*g\colon \CA \rightarrow \CA$ is defined so that $\CF[f*g]_i = \CF[f]_i\CF[g]_i$ holds:
\begin{align}
f*g := \sum_{i=1}^n \CF[f]_i\CF[g]_i v_i.
\label{eq:convolution}
\end{align}
We also obtain relations $\|f\|_{\CA} = \|\CF[f]\|_{\R^n}$ and $\CF[\lap f]_i = \rho_i\CF[f]_i$.
%and $\CE(f,g) = \sum_{i=1}^n\rho_i\CF[f]_i\CF[g]_i$.

\subsection{embedding and curvature}
\label{subsec:curvature}
Our setting so far did not use a coordinate of points in $V$, just used their indexes.
Herein, suppose that points are embedded in Euclidean space $\R^d$.
Namely, we consider a map
\begin{align}
V \ni i \mapsto (r_1(i), r_2(i), \cdots, r_d(i) ) =: \vec{r}_d(i) \in \R^d,
\label{eq:embedding}
\end{align}
where $r_s\in \CA$ and $\vec{r}_d\in \CA^{\oplus d}$.
This element $\vec{r}_d(i)$ is viewed as a coordinate for a point $i \in V$.
The Euclidean group $E(d)$ acts on $\R^d$, hence it defines a coordinate transformation $\vec{r}_d(i) \mapsto {\vec{r}\,}'_d(i) = \vec{r}_d(i)\cdot \overrightarrow{R}_d + \vec{u}_d$ for $(\overrightarrow{R}_d, \vec{u}_d) \in E(d) \cong O(d) \times \R^d$ (as a set).

In differential geometry, an embedding $\vec{r} \colon \CM \rightarrow \E^d$ induces a Riemann metric $g_{\vec{r}}$ on a manifold $\CM$, and especially determines the Laplace-Beltrami operator $\lap_{g_{\vec{r}}}$.
The normal bundle is given on $\CM$, and then the mean curvature vector $\overrightarrow{H}_{\vec{r}}$ is defined as the trace of the second fundamental form divided by $n = \dim \CM$.
Hence, the vector indicates the normal direction on each point of $\CM$, and its length is called the mean curvature.
Beltrami's formula relates those objects as
\begin{align*}
\lap_{g_{\vec{r}}} \vec{r} = -n \cdot\overrightarrow{H}_{\vec{r}}.
\end{align*}
One can refer to \cite{chen2013laplace, brakke1978motion} for mathematical details. 

Motivated by this formula,
%with the Laplacian $\lap$ given in \S \ref{subsec:laplace},
we \emph{define} a graph curvature vector $\overrightarrow{H}_d = (H_1, H_2, \cdots, H_d)\in \CA^{\oplus d}$ of an embedding $\vec{r}_d$ by
\begin{align}
\overrightarrow{H}_{d} := -\lap \vec{r}_d, \ \ H_s := -\lap r_s, \ \ \text{ for }s =1, 2, \cdots, d.
\label{eq:curvature}
\end{align}
Unlike differential geometry, this vector does not indicate the normal direction, because it is not defined for discrete points.
Nevertheless, the vector has a special meaning in physics as explained in \S \ref{sec:physics}.
The embedding energy $\CE(\vec{r}_d, \vec{r}_d):=\sum_{s=1}^d \CE(r_s, r_s)$ is given with the curvature vector, that is,
$\CE(\vec{r}_d, \vec{r}_d) = \int_V\langle \vec{r}_d(x), -\vec{H}_d(x)\rangle_{\R^d} d\mu_x$,
and invariant by the Euclidean group action.
In some cases, it is convenient to suppose a metric $\langle \cdot, \cdot \rangle_{\Omega^1_\CA}$ is induced by an embedding as in Figure \ref{fig:objects}.
For example, we can define weights $\{\ww_{ij}\}$ by using the distance, such as
\begin{align*}
\ww_{ij} := C\cdot \exp\left( -\frac{\|\vec{r}_d(i) - \vec{r}_d(j)\|_{\R^d}^2}{2\sigma^2}\right).
\end{align*}
Several researches show this type of weights converges into a heat kernel on a manifold in the limit $|V| \rightarrow \infty$ \cite{hein2005graphs, belkin2005towards, coifman2006diffusion}.
Instead, we study another type of weights in Theorem \ref{thm:newton}.

\begin{figure}[htbp]
\begin{center}
\includegraphics[width=12cm]{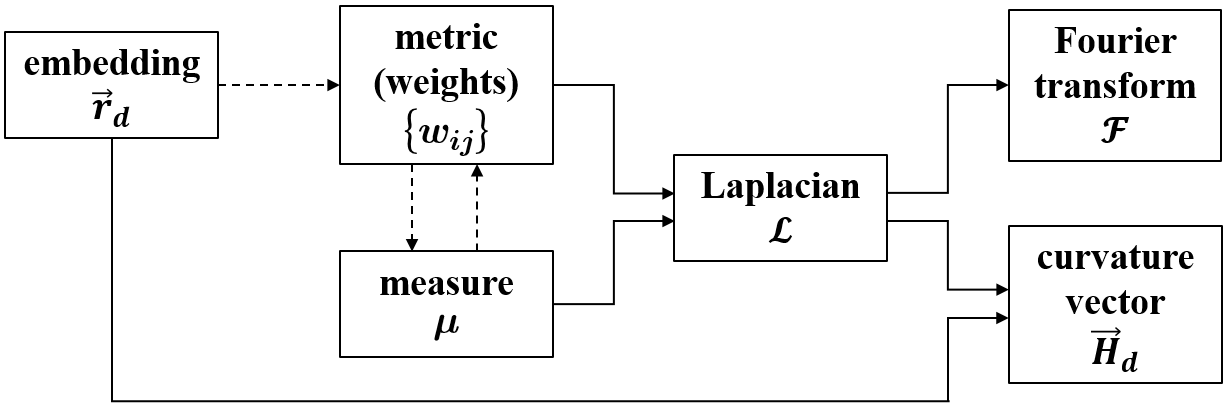}
\caption{Dependency of geometric objects. A dashed arrow means that its head object can be determined by its tail object, but not necessary.}
\label{fig:objects}
\end{center}
\end{figure}

\section{Matrix description of the geometric formulation}
\label{sec:matrix2}
In this section, we give matrix description of the formulation discussed in  \S \ref{sec:geom}. 

First we remark that our formulation contains two types of parameters in a measure $\mu$ and an inner product $\langle \cdot, \cdot \rangle_{\Omega^1_{\CA}}$ independently.
A measure $\mu$ is just given by a positive function on $V$, thus its degree of freedom is $n$.
On the other hand, an inner product $\langle \cdot, \cdot \rangle_{\Omega^1_{\CA}}$ is determined by weights $\{\ww_{ij}\}$ which satisfy
$\ww_{ij} \geq 0, \ww_{ji}=\ww_{ij}$ and $\ww_{ii}=0$, hence its degree of freedom is $n(n-1)/2$.
When they are taken on a certain relation, well-known cases appear as follows. 

Let $\bm{M} = \diag_i(\mu_i)$, $\bm{W}=\{\ww_{ij}\}_{i,j}$, and $\bm{D} = \diag_i(\deg(i))$ be $n\times n$-matrices. 

By $\{e_i\}$-basis, a function $f \in \CA$ is represented as a numerical vector $\bm{f} = {}^t\!(f_1, \cdots, f_n)$.
Then, an inner product $\langle f, g \rangle_\CA$ is written as ${}^t\!\bm{f} \bm{M} \bm{g}$, and the Dirichlet energy $\CE(f,g)=2\langle \d f, \d g\rangle_{\Omega^1_{\CA}}$ is as ${}^t\!\bm{f} (\bm{D} - \bm{W}) \bm{g}$, which does not depend on $\bm{M}$. 
The Laplacian $\lap$ is given as $\bm{M}^{-1}(\bm{D}-\bm{W})$, and its eigenvalue equation is
\begin{align*}
(\bm{D}-\bm{W})\bm{f} = \rho \bm{M}\bm{f}.
\end{align*}
The corresponding eigenvectors, denoted by an $n\times n$-matrix $\bm{V}=\{v_j(i)\}_{i,j}$, defines the Fourier transform $\bm{\CF[f]} = {}^t\!\bm{V}\bm{M}\bm{f}$ and its inverse $\bm{f} = \bm{V}\bm{\CF[f]}$, where ${}^t\!\bm{V}\bm{M}\bm{V} = \bm{I} =\bm{V} {}^t\!\bm{V}\bm{M}$.
An embedding $\vec{r}_d$ is described as an $n\times d$-matrix $\bm{R}=\{r_s(i)\}_{i,s}$, then its curvature vector $\overrightarrow{H}_d$ is as $-\bm{M}^{-1}(\bm{D}-\bm{W})\bm{R}$.

In the case of $\{\tilde{e}_i\}$-basis, since $f = \sum_{i\in V} \tilde{f}_i \tilde{e}_i$ as mentioned in \eqref{eq:basis}, we have $\langle f, g \rangle_\CA = {}^t\!\bm{\tilde{f}} \bm{\tilde{g}}$, where $\bm{\tilde{f}} = {}^t\!(\tilde{f_1}, \cdots, \tilde{f_n})$.
The Laplacian $\lap$ is described as $\bm{M}^{-1/2}(\bm{D}-\bm{W})\bm{M}^{-1/2}$, because
\begin{align*}
\lap f = \sum_{i \in V}\frac{\deg(i)}{\sqrt{\mu_i\mu_i}}\tilde{f}_i\tilde{e}_i - \sum_{i,j\in V}\frac{\ww_{ij}}{\sqrt{\mu_i\mu_j}}\tilde{f}_j\tilde{e}_i.
\end{align*}
When assume $\bm{M} = \bm{I}$ or $\bm{D}$, we obtain the known Laplacians; the combinatorial Laplacian, the random walk Laplacian, and the normalized Laplacian \cite{hein2007graph}.
However, we do not use this configuration for random walks in \S \ref{sec:random}.

The above notations are summarized in Table \ref{table:matrix2}.
\begin{table}[htbp]
  \caption{matrix description}
  \centering
  \begin{tabular}{c|c|ccc}
  \toprule
    & general by $\{e_i\}$ & $\bm{M} = \bm{I}$ & $\bm{M} = \bm{D}$ by $\{e_i\}$ & $\bm{M} = \bm{D}$ by $\{\tilde{e}_i\}$ \\
  \midrule
   $\langle f, g \rangle_\CA$ & ${}^t\!\bm{f} \bm{M} \bm{g}$ & ${}^t\!\bm{f} \bm{g}$ & ${}^t\!\bm{f} \bm{D} \bm{g}$ & ${}^t\!\bm{\tilde{f}} \bm{\tilde{g}}$ \\
   $\Vol(V)$ & $\tr\bm{M} $ & $|V| = n$ & $\tr\bm{D}$ & $\tr\bm{D}$ \\
   $\CE(f,g)$ & ${}^t\!\bm{f} (\bm{D} - \bm{W}) \bm{g}$ & ${}^t\!\bm{f} (\bm{D} - \bm{W}) \bm{g}$ & ${}^t\!\bm{f} (\bm{D} - \bm{W}) \bm{g}$ & ${}^t\!\bm{\tilde{f}} \bm{D}^{-\frac{1}{2}}(\bm{D} - \bm{W})\bm{D}^{-\frac{1}{2}} \bm{\tilde{g}}$  \\
   $\lap$ & $\bm{M}^{-1}(\bm{D} - \bm{W})$ & $\bm{D} -\bm{W}$ & $\bm{I} - \bm{D}^{-1}\bm{W}$ & $\bm{I} - \bm{D}^{-\frac{1}{2}} \bm{W} \bm{D}^{-\frac{1}{2}}$ \\
   $\CF[f]$ & ${}^t\!\bm{V}\bm{M}\bm{f}$ & ${}^t\!\bm{V}\bm{f}$ & ${}^t\!\bm{V}\bm{D}\bm{f}$ & ${}^t\!\bm{\tilde{V}}\bm{\tilde{f}}$ \\
   $\overrightarrow{H}_d$ & $\bm{M}^{-1}(\bm{W} - \bm{D})\bm{R}$ & $(\bm{W} -\bm{D})\bm{R}$ & $(\bm{D}^{-1}\bm{W} - \bm{I})\bm{R}$ & $(\bm{D}^{-\frac{1}{2}} \bm{W} \bm{D}^{-\frac{1}{2}}-\bm{I})\bm{\tilde{R}}$ \\
  \bottomrule
  \end{tabular}\label{table:matrix2}
\end{table}

\section{Application I: spectral graph theory}
\label{sec:spectral}
We review basic results about eigenvalue estimation in spectral graph theory to check its compatibility with our formulation.
These results are regarded as discrete analogues of spectral geometry and related to the graph cut problem in \S \ref{subsec:graphcut}.
See also \cite{chung1997spectral, nica2016brief} for reference.

\subsection{upper bound of eigenvalues}
First, we estimate an upper bound for the largest eigenvalue $\rho_n$.
Put $\delta := \max_{i\in V} \deg(i)/\mu_i$.
\begin{lem}
$\rho_n \leq 2\delta$.
\label{lem:upper}
\end{lem}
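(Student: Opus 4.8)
The plan is to use the variational (Rayleigh quotient) characterization of the top eigenvalue. Since $\lap$ is self-adjoint with respect to $\langle \cdot, \cdot\rangle_\CA$ and satisfies $\CE(f,g) = \langle f, \lap g\rangle_\CA$, the eigenfunctions $\{v_i\}$ from \eqref{eq:eigenfunctions} form an orthonormal basis diagonalizing the quadratic form $\CE$. Consequently the largest eigenvalue admits the characterization
\begin{align*}
\rho_n = \max_{f \neq 0} \frac{\CE(f,f)}{\|f\|_\CA^2}.
\end{align*}
So it suffices to show $\CE(f,f) \leq 2\delta\,\|f\|_\CA^2$ for every $f \in \CA$, and then the bound $\rho_n \leq 2\delta$ follows immediately.

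First I would expand the Dirichlet energy using \eqref{eq:energy} and apply the elementary inequality $(f_i - f_j)^2 \leq 2(f_i^2 + f_j^2)$:
\begin{align*}
\CE(f,f) = \frac{1}{2}\sum_{i,j\in V} \ww_{ij}(f_i - f_j)^2 \leq \sum_{i,j\in V}\ww_{ij}(f_i^2 + f_j^2).
\end{align*}
Next I would exploit the symmetry $\ww_{ij} = \ww_{ji}$ together with the identity $\deg(i) = \sum_{j\in V}\ww_{ij}$ established in \S \ref{subsec:metric}. Both the $f_i^2$ and the $f_j^2$ contributions collapse to the same sum, giving
\begin{align*}
\sum_{i,j\in V}\ww_{ij}(f_i^2 + f_j^2) = 2\sum_{i\in V}\deg(i)\,f_i^2.
\end{align*}

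Finally I would rewrite $\deg(i) = (\deg(i)/\mu_i)\,\mu_i$ and use the definition $\delta = \max_i \deg(i)/\mu_i$ to pull out the factor $\delta$:
\begin{align*}
2\sum_{i\in V}\deg(i)\,f_i^2 = 2\sum_{i\in V}\frac{\deg(i)}{\mu_i}\,\mu_i f_i^2 \leq 2\delta\sum_{i\in V}\mu_i f_i^2 = 2\delta\,\|f\|_\CA^2,
\end{align*}
where the last equality is \eqref{eq:innerproduct}. Chaining these estimates yields $\CE(f,f) \leq 2\delta\,\|f\|_\CA^2$ and hence the claim. I do not anticipate a genuine obstacle here: the argument is a routine Rayleigh-quotient bound, and the only point requiring mild care is keeping the measure $\mu$ and the weights $\{\ww_{ij}\}$ correctly separated, since the normalization by $\mu_i$ is exactly what produces the ratio $\deg(i)/\mu_i$ whose maximum defines $\delta$. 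The bound is sharp on a single edge (a two-point graph with $\mu_i = \deg(i)$), which is a good sanity check that the constant $2$ cannot be improved in general.
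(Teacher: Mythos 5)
Your proof is correct and is essentially the paper's argument: the paper applies the identical chain of estimates (the inequality $(a-b)^2 \leq 2(a^2+b^2)$, the symmetry $\ww_{ij}=\ww_{ji}$ with $\deg(i)=\sum_j \ww_{ij}$, and factoring out $\delta$) directly to the normalized top eigenfunction $v_n$ via $\rho_n = \langle v_n, \lap v_n\rangle_\CA$, whereas you prove the bound for arbitrary $f$ and invoke the Rayleigh-quotient characterization, a purely cosmetic difference.
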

\begin{proof}
We have
\begin{align*}
\rho_n = \langle v_n, \lap v_n\rangle_{\CA} &= \frac{1}{2}\sum_{i,j\in V}\ww_{ij}\left(v_n(i) - v_n(j)\right)^2 \\
&\leq \sum_{i,j\in V} \ww_{ij}\left(v_n(i)^2 + v_n(j)^2\right) \leq 2\max_{i\in V} \frac{\deg(i)}{\mu_i}
\end{align*}
because $1 = \| v_n \|_{\CA}^2 = \sum_{i\in V}v_n(i)^2\mu_i$.
\end{proof}
Next, we give an upper bound for the second smallest eigenvalue $\rho_2$, which is characterized as a minimum value of $\CE(f, f) / \|f\|^2_{\CA}$ in functions $\{f \in \CA \mid f \perp 1_{\CA}\}$.
For this purpose, \emph{the isoperimetric constant} $\beta$ is useful, because it is defined in a similar way to the characterization of $\rho_2$:
\begin{align}
\beta=\min_{\emptyset \neq A\subsetneq V} \frac{\Vol(\d A)}{\Vol(A)} := \min_{\emptyset \neq A\subsetneq V}\frac{\CE(\chi_A, \chi_A)}{\|\chi_A \|_{\CA}^2},
\label{eq:isoperimetric}
\end{align}
where $\chi_A := \sum_{i\in V}e_i \in \CA$ and $A$ is taken over all subsets satisfying $\Vol(A) \leq \Vol(V)/2$.
From \eqref{eq:energy-weight}, we can check $\CE(\chi_A, \chi_A) = \sum_{i \in A, j \in A^c}\ww_{ij}$, then $\Vol(\d A^c) = \Vol(\d A)$ for a complement $A^c = V \setminus A$.
\begin{lem}
$\rho_2 \leq 2\beta$.
\end{lem}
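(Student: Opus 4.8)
The plan is to combine the Rayleigh-quotient characterization of $\rho_2$ with a test function built from the subset that realizes $\beta$. Recall that $\rho_2$ is the minimum of $\CE(f,f)/\|f\|_\CA^2$ over all $f \in \CA$ with $f \perp 1_\CA$, so any admissible trial function immediately produces an upper bound. I would start by fixing a subset $A$ achieving the minimum in \eqref{eq:isoperimetric}, so that $\emptyset \neq A \subsetneq V$, $\Vol(A) \le \Vol(V)/2$, and $\CE(\chi_A, \chi_A)/\|\chi_A\|_\CA^2 = \beta$.

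The indicator $\chi_A$ is not orthogonal to $1_\CA$, so the first step is to subtract off its mean: set $f := \chi_A - m_{\chi_A}\, 1_\CA$ with $m_{\chi_A} = \Vol(A)/\Vol(V)$, which guarantees $f \perp 1_\CA$ and hence makes $f$ an admissible test function for the variational problem. The crucial observation is that $\CE$ depends only on the differences $f_i - f_j$, so it is invariant under adding a constant; therefore $\CE(f,f) = \CE(\chi_A, \chi_A)$, and the numerator of the Rayleigh quotient is left unchanged by the orthogonalization.

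It then remains to estimate the denominator $\|f\|_\CA^2$. Expanding the square and using $\|\chi_A\|_\CA^2 = \langle \chi_A, 1_\CA\rangle_\CA = \Vol(A)$ together with $\|1_\CA\|_\CA^2 = \Vol(V)$, a short computation should give
\[
\|f\|_\CA^2 = \Vol(A)\left(1 - \frac{\Vol(A)}{\Vol(V)}\right) = \frac{\Vol(A)\,\Vol(A^c)}{\Vol(V)}.
\]
The only quantitative input is the normalization $\Vol(A) \le \Vol(V)/2$, which forces $\Vol(A^c)/\Vol(V) \ge 1/2$ and hence $\|f\|_\CA^2 \ge \Vol(A)/2$.

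Feeding these two facts into the Rayleigh quotient yields
\[
\rho_2 \le \frac{\CE(f,f)}{\|f\|_\CA^2} \le \frac{\CE(\chi_A, \chi_A)}{\Vol(A)/2} = 2\,\frac{\CE(\chi_A, \chi_A)}{\|\chi_A\|_\CA^2} = 2\beta,
\]
which is the assertion. I do not anticipate a serious obstacle here; this is essentially a one-line test-function argument. The two points that require care are the orthogonalization (so as to respect the constraint $f \perp 1_\CA$ without disturbing $\CE$) and keeping track of where the factor of $2$ enters — namely from the volume-balancing condition $\Vol(A) \le \Vol(V)/2$, which is precisely why the definition of $\beta$ restricts the minimization to such subsets.
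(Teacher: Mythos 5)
Your proposal is correct and is essentially the paper's own proof: your mean-centered test function $\chi_A - \frac{\Vol(A)}{\Vol(V)}1_\CA$ is exactly $\Vol(V)^{-1}$ times the paper's trial function $f_A = \Vol(A^c)\chi_A - \Vol(A)\chi_{A^c}$, so the two Rayleigh-quotient computations coincide, with the factor $2$ entering in both cases from $\Vol(A^c) \geq \Vol(V)/2$. The only cosmetic difference is that you fix a minimizing $A$ at the outset, while the paper bounds the quotient for every admissible $A$ and then takes the minimum.
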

\begin{proof}
For any $\emptyset \neq A\subsetneq V$ such that $\Vol(A) \leq \Vol(V)/2$, put $f_A = \Vol(A^c)\chi_A - \Vol(V)\chi_{A^c}$, which satisfies $f_A \perp 1_{\CA}$, then we have
\begin{align*}
\CE(f_A, f_A) & = \sum_{i\in A, j\in A^c}\ww_{ij}\left(\Vol(A) + \Vol(A^c)\right)^2= \Vol(V)^2\Vol(\d A), \\
\|f_A\|_{\CA}^2 &= \Vol(A)\Vol(A^c)^2 + \Vol(A)^2\Vol(A^c) = \Vol(V)\Vol(A)\Vol(A^c).
\end{align*}
Hence, we obtain $\rho_2 \leq  \CE(f_A, f_A) / \|f_A\|^2_{\CA} \leq 2\Vol(\d A) / \Vol(A)$ by $\Vol(V)/2 \leq \Vol(A^c)$. 
\end{proof}

\subsection{lower bound of eigenvalues}
Here, we estimate an lower bound for the second smallest eigenvalue $\rho_2$.
\begin{thm}
$\rho_2 \geq \beta^2/2\delta$.
\end{thm}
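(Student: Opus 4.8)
The plan is to prove this discrete Cheeger-type inequality by the classical two-step route. First I would truncate the second eigenfunction $v_2$ to a one-sided, small-support test function $g$ whose Rayleigh quotient $\CE(g,g)/\|g\|_\CA^2$ still lies below $\rho_2$; then I would bound that same quotient from below by $\beta^2/2\delta$ using a co-area (layer-cake) decomposition together with the Cauchy--Schwarz inequality. Chaining the two bounds gives $\rho_2 \geq \CE(g,g)/\|g\|_\CA^2 \geq \beta^2/2\delta$.

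For the first step I would take $f = v_2$, so that $\lap f = \rho_2 f$ and $f \perp 1_\CA$. Orthogonality to the constants forces $f$ to change sign, so I write $f = f_+ - f_-$ with $f_\pm \geq 0$ and disjoint supports; replacing $f$ by $-f$ if necessary, I may assume $\Vol(\text{supp}\,f_+) \leq \Vol(V)/2$ and set $g := f_+$. The key claim is $\CE(g,g) \leq \rho_2\|g\|_\CA^2$. It follows from $\CE(g,g) \leq \CE(g,f)$, which reduces to $\CE(f_+,f_-) \leq 0$: edge by edge the summand $((f_+)_i - (f_+)_j)((f_-)_i - (f_-)_j)$ is a product of factors with opposite sign patterns because $f_+$ and $f_-$ never share a vertex, so every term is nonpositive. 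Then $\CE(g,f) = \langle g, \lap f\rangle_\CA = \rho_2\langle f_+, f\rangle_\CA = \rho_2\|g\|_\CA^2$, using $\langle f_+, f_-\rangle_\CA = 0$ to reduce $\langle f_+, f\rangle_\CA$ to $\|g\|_\CA^2$.

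For the second step I would show that any $g \geq 0$ with $\Vol(\text{supp}\,g) \leq \Vol(V)/2$ satisfies
\[ \CE(g,g) \geq \frac{\beta^2}{2\delta}\|g\|_\CA^2. \]
The engine is the layer-cake identity $\sum_{i<j}\ww_{ij}|g_i^2 - g_j^2| = \int_0^\infty \Vol(\d A_t)\,dt$ with $A_t = \{i : g_i^2 > t\}$, paired with $\int_0^\infty \Vol(A_t)\,dt = \|g\|_\CA^2$. Since each $A_t \subseteq \text{supp}\,g$ meets the volume constraint, the isoperimetric bound $\Vol(\d A_t) \geq \beta\,\Vol(A_t)$ applies, and integrating yields $\sum_{i<j}\ww_{ij}|g_i^2 - g_j^2| \geq \beta\|g\|_\CA^2$. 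Factoring $|g_i^2 - g_j^2| = |g_i - g_j|\,|g_i + g_j|$ and applying Cauchy--Schwarz bounds the left side by $\sqrt{\CE(g,g)}\,\big(\sum_{i<j}\ww_{ij}(g_i+g_j)^2\big)^{1/2}$; then $(g_i+g_j)^2 \leq 2(g_i^2+g_j^2)$ and $\sum_i \deg(i)g_i^2 \leq \delta\|g\|_\CA^2$ turn the second factor into $\sqrt{2\delta}\,\|g\|_\CA$. Rearranging and squaring gives the displayed inequality.

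Combining the two steps yields $\rho_2 \geq \beta^2/2\delta$. The main obstacle I expect is the co-area step: one must arrange the layer-cake decomposition so that the level sets $A_t$ automatically obey $\Vol(A_t) \leq \Vol(V)/2$ — which is precisely why the one-sided truncation in the first step is indispensable — and one must track carefully the factors of two distinguishing $\CE(g,g)$, the full double sum $\sum_{i,j}$, and the half-sum $\sum_{i<j}$, so that $\delta$ and $\beta$ enter the final estimate with the correct powers.
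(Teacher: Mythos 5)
Your proposal is correct and takes essentially the same route as the paper's proof: truncate $v_2$ to its positive part $g_+$ supported on volume at most $\Vol(V)/2$, establish $\CE(g_+,g_+)\leq \rho_2\|g_+\|_{\CA}^2$, apply a co-area--type inequality against the isoperimetric constant $\beta$ to the function $g_+^2$, and finish with Cauchy--Schwarz plus the bounds $(g_i+g_j)^2\leq 2(g_i^2+g_j^2)$ and $\sum_i \deg(i)g_i^2\leq\delta\|g_+\|_{\CA}^2$. Your layer-cake integral $\int_0^\infty \Vol(\d A_t)\,dt$ is just the continuous rewriting of the paper's discrete level-set decomposition $f=\sum_s h_s\chi_{A_s}$, and your edge-by-edge verification of $\CE(f_+,f_-)\leq 0$ merely fills in the step the paper labels ``easy to check.''
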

\begin{proof}
First we claim
\begin{align}
\beta\int_V f(x) d\mu_x \leq \frac{1}{2}\sum_{i,j \in V}\ww_{ij}|f_i-f_j|,
\label{eq:inequality}
\end{align}
for a positive function $f\in \CA$ such that $\Vol(\{i\in V\mid f(i)>0\}) \leq \Vol(V)/2$.
We take a sequence of subsets $\emptyset = A_0 \subsetneq A_1 \subsetneq \cdots \subsetneq A_l \subsetneq A_{l+1} = V$ so that $f$ is represented as $\sum_{s=1}^l h_s\chi_{A_s}$ by $h_s = f|_{A_s\setminus A_{s-1}} - f|_{A_{s+1}\setminus A_s} \in \R_{>0}$.
Then, we can see
\begin{align*}
\frac{1}{2}\sum_{i,j \in V}\ww_{ij}|f_i-f_j| &= \sum_{s=1}^l \sum_{i\in A_s, j\in A_s^c}\ww_{ij}h_s \\
&= \sum_{s=1}^l h_s \Vol(\d A_s) \geq \beta \sum_{s=1}^l h_s \int_V \chi_{A_s}(x) d\mu_x,
\end{align*}
as required.
Now, we can write $v_2 = g_+ - g_-$ by positive functions $g_+, g_-\in \CA$ which satisfies $\Vol(\{i\in V\mid g_+(i)>0\}) \leq \Vol(V)/2$ since $\int_V g_+(x)d\mu_x = \int_Vg_-(x)d\mu_x$.
It is easy to check $\CE(g_+, g_+) \leq \CE(v_2, g_+) = \rho_2\|g_+\|_{\CA}^2$, then applying \eqref{eq:inequality} to $f = g_+^2$, we obtain
\begin{align*}
\beta^2\|g_+\|_{\CA}^4 &\leq \frac{1}{4}\Bigl( \sum_{i,j\in V}\ww_{ij}|g_+(i) - g_+(j)|\cdot |g_+(i) + g_+(j)| \Bigr)^2 \\
&\leq \frac{1}{4}\Bigl( \sum_{i,j\in V}\ww_{ij}|g_+(i) - g_+2(j)|^2 \Bigr) \cdot \Bigl( \sum_{i,j\in V}\ww_{ij}|g_+(i) + g_+(j)|^2 \Bigr)^2 \\
&\leq \frac{1}{2}\CE(g_+, g_+) \cdot 2\sum_{i,j\in V} \ww_{ij}\left(g_+(i)^2 + g_+(j)^2\right) \leq 2\rho_2\delta \|g_+\|_{\CA}^4.
\end{align*}
In the second inequality, we used the Cauchy-Schwarz inequality: $\langle \omega_-, \omega_+ \rangle_{\Omega^1_{\CA}}^2 \leq \| \omega_- \|_{\Omega^1_{\CA}}^2\cdot\|\omega_+ \|_{\Omega^1_{\CA}}^2$ for $\omega_{\pm} = \sum_{i,j\in V, i\neq j}|v_2(i)\pm v_2(j)|e_i\otimes e_j \in \Omega^1_{\CA}$. 
\end{proof}
This theorem is called \emph{Cheeger's inequality} and its continuous analogue is known in differential geometry \cite{cheeger1969lower}.

\section{Application II: random walks}
\label{sec:random}
In this section, we deduce some notations of random walks from our formulation.
In \S \ref{subsec:heateq}, we review random walks briefly, and in \S \ref{subsec:commutedist}, we consider their connection with a geometric distance.

\subsection{heat equations}
\label{subsec:heateq}
For $c > 0$, putting $S_c := \id - c^{-1} \lap \colon \CA\rightarrow \CA$, we have
\begin{align}
S_cf = \sum_{i\in V}\frac{c\mu_i - \deg(i)}{c\mu_i}f_ie_i + \sum_{i,j\in V}\frac{\ww_{ij}}{c\mu_i}f_je_i = \sum_{i,j\in V}\frac{\theta_{ij}}{c\mu_i}f_je_i,
\label{eq:integralop}
\end{align}
for $f\in \CA$, where we put $\theta_{ii} = c\mu_i - \deg(i)$ and $\theta_{ij}=\ww_{ij}$ for $i, j \in V$.
%Especially, if we take $c \geq \delta$, we have $c\mu_i - \deg(i) \geq 0$, and 
Needless to say, the eigenvalue decomposition of $S_c$ is given as
\begin{align*}
S_c v_i= \lambda_i v_i, \ \lambda_i = 1 - \frac{\rho_i}{c}, \ 1 = \lambda_1 \geq \lambda_2 \geq \cdots \geq \lambda_n \geq 1-2\frac{\delta}{c},
%\label{eq:integraleigen}
\end{align*}
by \eqref{eq:eigenfunctions} and Lemma \ref{lem:upper}.
Besides, $S_c f$ is described as
\begin{align}
S_c f = \sum_{i=1}^n\lambda_i \langle f, v_i\rangle_\CA v_i.
\label{eq:mercer}
\end{align}
\begin{prop}
For $k \in \Z_{\geq 0}$ and $t \in \R_{\geq 0}$, define operators
\begin{align*}
P_k := S_c^k, \ \ Q_t := \sum_{k=0}^\infty \frac{e^{-t}t^k}{k!} P_k.
\end{align*}
In addition, put $p_k: =P_kf_0$ and $q_t=Q_tf_0$ for any $f_0\in \CA$.
Then, $p_k$ and $q_t$ satisfy the discrete and continuous time heat equations
\begin{align}
p_{k+1} - p_k &= -\frac{1}{c}\lap p_k, \ \ p_0 = f_0 \label{eq:discretehe}\\
\frac{d}{dt}q_t &= -\frac{1}{c}\lap q_t, \ \ q_0 = f_0 \label{eq:continuoushe}
\end{align}
respectively.
\label{prop:heateq}
\end{prop}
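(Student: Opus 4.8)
The plan is to treat the two heat equations separately, since the discrete one is purely algebraic and the continuous one is its Poissonized counterpart. For the discrete equation I would write $p_{k+1}-p_k = (S_c-\id)S_c^k f_0$ and invoke the definition $S_c = \id - c^{-1}\lap$, which gives $S_c - \id = -c^{-1}\lap$ and hence $p_{k+1}-p_k = -c^{-1}\lap p_k$ at once; the initial condition is $p_0 = S_c^0 f_0 = f_0$. No estimates are needed here, because $\CA$ is finite-dimensional and $S_c$ is an honest linear operator (a matrix), so powers and differences pose no problem.

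For the continuous equation, the cleanest route I would take is to recognize $Q_t$ as an operator exponential. Since $P_k = S_c^k$, I would rewrite
\begin{align*}
Q_t = e^{-t}\sum_{k=0}^\infty \frac{t^k}{k!}S_c^k = e^{-t}\exp(tS_c) = \exp\!\big(t(S_c-\id)\big) = \exp\!\Big(-\tfrac{t}{c}\lap\Big),
\end{align*}
where the middle identity is the matrix-exponential series and the last step uses $S_c - \id = -c^{-1}\lap$ again. Differentiating the matrix exponential then yields $\frac{d}{dt}Q_t = -\frac{1}{c}\lap\, Q_t$, so that $\frac{d}{dt}q_t = -\frac{1}{c}\lap q_t$; and $Q_0 = \exp(0) = \id$ gives $q_0 = f_0$. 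This presentation also makes transparent why the Poisson weights appear: $Q_t$ is the continuous-time heat semigroup obtained by subordinating the discrete iteration $P_k$ against a Poisson clock.

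If one prefers to argue directly from the series, I would instead differentiate term by term. Using $\frac{d}{dt}\big(e^{-t}t^k/k!\big) = -e^{-t}t^k/k! + e^{-t}t^{k-1}/(k-1)!$ for $k\geq 1$ (and $-e^{-t}$ for $k=0$), I obtain $\frac{d}{dt}q_t = -q_t + \sum_{k\geq 1} \frac{e^{-t}t^{k-1}}{(k-1)!}p_k$. Reindexing with $j=k-1$ and substituting the discrete heat equation $p_{j+1} = p_j - c^{-1}\lap p_j$ established in the first step collapses the remaining sum to $q_t - c^{-1}\lap q_t$, leaving $\frac{d}{dt}q_t = -c^{-1}\lap q_t$.

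The only genuine analytic point, and the sole candidate for an obstacle, is justifying the interchange of differentiation and summation, equivalently the convergence of the exponential series. On a finite set $V$ the operator $S_c$ has finite norm, so both $Q_t$ and its formal $t$-derivative are series that converge uniformly on every compact $t$-interval; termwise differentiation is therefore legitimate, and this settles everything. In short, once finite-dimensionality of $\CA$ removes all convergence concerns, the entire argument reduces to algebra around the single identity $S_c - \id = -c^{-1}\lap$.
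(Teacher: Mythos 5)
Your proof is correct, and in fact both of its routes appear in the paper's own proof, only with the emphasis reversed. Your ``alternative'' series computation --- differentiating the Poisson weights term by term, reindexing, and inserting the discrete equation $p_{k+1} = p_k - \frac{1}{c}\lap p_k$ --- is precisely the paper's main argument, which writes $\frac{d}{dt}q_t = \sum_{k}\frac{e^{-t}t^k}{k!}(p_{k+1}-p_k) = (S_c - \id)\,q_t$. Your primary route arrives at the identity $Q_t = \exp(-t\lap/c)$, which the paper also asserts, but the paper obtains it spectrally: it expands $p_k = \sum_{i}\lambda_i^k\langle f_0, v_i\rangle_\CA v_i$ via \eqref{eq:mercer} and resums the Poisson series to $q_t = \sum_i e^{-t\rho_i/c}\langle f_0, v_i\rangle_\CA v_i$. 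Your derivation by commuting matrix exponentials, $e^{-t}\exp(tS_c) = \exp\bigl(t(S_c-\id)\bigr) = \exp(-t\lap/c)$, is more elementary --- it needs no eigenfunctions, hence no appeal to the self-adjointness of $\lap$ --- and it makes the Poisson subordination of the discrete iteration transparent. What the paper's spectral route buys instead is the explicit frequency-domain formula for $q_t$, which is what gets reused later in the filtering interpretation of \S\ref{subsec:filtering}. Your explicit justification of termwise differentiation via finite-dimensionality is sound (the paper passes over this point silently), so there is no gap on either side.
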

\begin{proof}
The discrete time heat equation follows from $p_{k+1} - p_k = (S_c-\id)p_k$.
Besides, for $q_t$, we have,
\begin{align*}
\frac{d}{dt}q_t = \sum_{k=0}^\infty \frac{e^{-t}t^k}{k!} \left(-p_k + p_{k+1}\right) = \left(S_c-\id\right)\sum_{k=0}^\infty \frac{e^{-t}t^k}{k!} p_k.
\end{align*}
Hence, $q_t$ is a solution of the continuous time heat equation.

The latter part is also checked by \eqref{eq:mercer}.
We have $p_k = \sum_{i=1}^n \lambda_i^k \langle f_0,v_i\rangle_\CA v_i$, thus,
\begin{align*}
q_t = \sum_{i=1}^n \sum_{k=0}^{\infty} e^{-t}\frac{t^k\lambda_i^k}{k!} \langle f_0,v_i\rangle_\CA v_i = \sum_{i=1}^n e^{ -t \rho_i/c} \langle f_0,v_i\rangle_\CA v_i.
\end{align*}
This gives $Q_t = \exp(-t\lap/c)$, then the assertion immediately follows.
See also \S \ref{subsec:filtering}.
\end{proof}
In this sense, $S_c$ is viewed as \emph{an integral operator of a heat kernel}, which is given as $\sum_{i=1}^n\lambda_iv_i\otimes v_i \in \CA\otimes \CA$.
If we impose the condition $c \geq \delta$, then we have $(S_ce_y)(x) = \theta_{xy} \geq 0 $ for any $x, y \in V$.
Thereby, a function $y \mapsto (S_ce_y)(x)$ defines a discrete probability distribution because of $S_c1_{\CA} = 1_{\CA}$.
In random walk settings,
\begin{align*}
P(x, y):= (S_ce_y)(x) = \frac{\theta_{xy}}{c\mu_i}, \ \ p_k(x,y):=\frac{(P_ke_y)(x)}{\mu(y)}
\end{align*}
is known as \emph{the transition probability and transition density} respectively \cite{kumagai2014random, barlow2017random}.
Here, $(P_ke_y)(x) = \P^x(X_k = y)$ means the probability of transitioning from $x$ to $y$ in $k$ steps.

\subsection{commute time distance}
\label{subsec:commutedist}
Let $\tau_+ := \min \{ k \ge 1 \mid X_k = y \}$ be the first hitting time and $m(x,y) := \E^x [ \tau_+]$ be its expectation.
It is easy to see
\begin{align*}
m(x,y) &= P(x,y) + \sum_{z \in V\setminus y}P(x,z)(1+m(z,y)) \\
&=1 + \sum_{z \in V}P(x,z)m(z,y) - P(x,y)m(y,y).
\end{align*}
Notice that we can rearrange the above equation as 
\begin{align}
m(\cdot, y) = 1_{\CA} + S_cm(\cdot, y)-m(y,y)S_ce_y,
\label{eq:firsthit}
\end{align}
where $m(\cdot, y) \in \CA$.
\begin{thm}[\cite{fouss2007random}]
Put $T(\cdot,y) := m(\cdot,y) - m(y,y)e_y \in \CA$.
Then, we have
\begin{align*}
n(x,y):=\frac{T(x,y) + T(y,x)}{c\Vol(V)} = \sum_{i=2}^n\frac{1}{\rho_i}(v_i(y)-v_i(x))^2,
\end{align*}
for any $x, y\in V$.
\end{thm}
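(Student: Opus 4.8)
The plan is to turn the recursive hitting-time relation \eqref{eq:firsthit} into a Poisson equation for $T(\cdot, y)$ and then invert the Laplacian by eigenfunction expansion. First I would substitute $S_c = \id - c^{-1}\lap$ into \eqref{eq:firsthit} and write $S_c e_y = e_y - c^{-1}\lap e_y$. Collecting terms, the contributions $m(y,y)\lap e_y$ coming from $S_c e_y$ cancel exactly against the term generated by $T(\cdot, y) = m(\cdot, y) - m(y,y) e_y$, and I expect to arrive at the key identity
\begin{align*}
\lap T(\cdot, y) = c\bigl(1_\CA - m(y,y)\, e_y\bigr).
\end{align*}
This isolates the commute-time object as the solution of a single Poisson problem whose right-hand side is supported on the constant function and on $e_y$.

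Next I would solve this equation in the orthonormal eigenbasis $\{v_i\}$ of \eqref{eq:eigenfunctions}. Since $\rho_1 = 0$ and $v_1 \propto 1_\CA$, the operator $\lap$ has a nontrivial kernel, so the equation is solvable only if its right-hand side is orthogonal to $1_\CA$. Because $T(\cdot, y)$ exists by construction, this solvability condition must hold, and $\langle 1_\CA - m(y,y) e_y, 1_\CA\rangle_\CA = \Vol(V) - m(y,y)\mu_y = 0$ forces the identity $m(y,y)\mu_y = \Vol(V)$ (the expected return time is the reciprocal of the stationary mass, as one expects for the $\mu$-reversible walk). Projecting the Poisson equation onto $v_i$ for $i \ge 2$ and using the self-adjointness $\langle \lap T, v_i\rangle_\CA = \rho_i\langle T, v_i\rangle_\CA$ together with $\langle e_y, v_i\rangle_\CA = \mu_y v_i(y)$, I would read off the coefficients $\langle T(\cdot, y), v_i\rangle_\CA = -c\Vol(V)\,\rho_i^{-1} v_i(y)$.

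The $v_1$-component of $T(\cdot, y)$ is left undetermined by the Poisson equation, since it lies in $\Ker\lap$, and this is precisely where the definition of $T$ re-enters: I would fix it using the normalization $T(y,y) = m(y,y) - m(y,y) = 0$. This yields
\begin{align*}
T(x,y) = c\Vol(V)\sum_{i=2}^n \frac{v_i(y)\bigl(v_i(y) - v_i(x)\bigr)}{\rho_i}.
\end{align*}
Symmetrizing — adding the analogous expression for $T(y,x)$, whose leading constant is again $c\Vol(V)$ by the same return-time identity applied at $x$ — makes the cross terms combine into the perfect square $\bigl(v_i(y) - v_i(x)\bigr)^2$, and dividing by $c\Vol(V)$ gives the claimed formula for $n(x,y)$.

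I expect the main obstacle to be the bookkeeping around the kernel of $\lap$. One must simultaneously exploit the solvability condition to pin down $m(y,y)\mu_y = \Vol(V)$ and the normalization $T(y,y) = 0$ to fix the free constant-mode; it is only after symmetrizing in $x$ and $y$ that the constant modes and the single-index asymmetric terms cancel, leaving the manifestly symmetric and nonnegative expression. A secondary point to verify is that every projection uses the $\mu$-weighted inner product $\langle\cdot,\cdot\rangle_\CA$ rather than the Euclidean one, so that the factor $\mu_y$ in $\langle e_y, v_i\rangle_\CA$ is correctly tracked.
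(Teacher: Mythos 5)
Your proposal is correct and follows essentially the same route as the paper: both derive the Poisson equation $\lap T(\cdot,y) = c\bigl(1_\CA - m(y,y)e_y\bigr)$ from \eqref{eq:firsthit}, project onto the eigenbasis $\{v_i\}$ (your ``solvability condition'' is exactly the paper's $i=1$ projection yielding $m(y,y)\mu_y = \Vol(V)$), fix the kernel mode via $T(y,y)=0$, and symmetrize in $x$ and $y$. No gaps; your bookkeeping of the $\mu$-weighted inner product and the free constant mode matches the paper's computation step for step.
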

\begin{proof}
From the equation \eqref{eq:firsthit}, we have 
\begin{align*}
c^{-1}\lap T(\cdot, y) = (\id - S_c)(m(\cdot,y) - m(y,y)e_y) = 1_{\CA} - m(y,y)e_y.
\end{align*}
By taking the inner product with $v_i$ and using \eqref{eq:eval}, we get
\begin{align*}
0 = c^{-1}\rho_1\langle v_1, T(\cdot, y)\rangle_{\CA} &= \Vol(V)^{1/2} - m(y,y)\mu_y v_1(y), &&\text{for }i =1, \\ 
c^{-1}\rho_i\langle v_i, T(\cdot, y)\rangle_{\CA} &= - m(y,y)\mu_y v_i(y), &&\text{otherwise.}
\end{align*}
The first equation means $m(y,y)\mu_y = \Vol(V)$, and the second equation leads to the Fourier coefficients, hence we obtain
\begin{align*}
T(\cdot, y) &= \sum_{i = 1}^n \CF[T(\cdot, y)]_iv_i \\
&= a_01_{\CA} - \sum_{i=2}^n\frac{c\Vol(V)}{\rho_i}v_i(y)v_i = \sum_{i=2}^n\frac{c\Vol(V)}{\rho_i}v_i(y)(v_i(y)1_{\CA} - v_i).
\end{align*}
At the last equality, we used the fact $T(y,y)=0$ to determine a constant $a_0$.
\end{proof}
$n(x,y)$ is known as \emph{the commute time distance} divided by $c\Vol(V)$, and equal to the Euclidean distance via the embedding 
\begin{align}
V \ni x\mapsto (\rho_2^{-1/2}v_2(x), \rho_3^{-1/2}v_3(x), \cdots). \label{eq:walk}
\end{align}
This is regarded as one of branches of \emph{the Laplacian eigenmaps} \cite{belkin2002laplacian},
\begin{align}
V \ni x\mapsto (v_2(x), v_3(x), \cdots),
\label{eq:lapmaps}
\end{align}
and it links with PCA is studied by \cite{fouss2007random}.
We also see these relations in the following section.

\section{Application III: PCA}
\label{sec:pca}
Here, we study a geometric aspect of the empirical covariance, and then consider a principal component analysis as a branch of manifold learning.

\subsection{random variable and embedding}
\label{subsec:random}
Let $(\Omega, \mathcal{F}, P)$ be a probability space, and we consider a random variable $X = (X_1, X_2, \cdots, $ $X_d) \colon \Omega \rightarrow \R^d$.
A mean $\E[X_s]\in \R$ and a covariance $C(X_s, X_t)\in \R$ are defined by
\begin{align*}
\E[X_s] := \int_\Omega X_s(\omega) dP_\omega, \ \ C(X_s, X_t) := \E[X_sX_t] - \E[X_s]\E[X_t].
\end{align*}
Now, we regard $(V, 2^V, \mu/\Vol(V))$ as a probability space, then it follows that $X_s \in \CA$ and $m_{X_s} = \E[X_s]$.
Moreover, we can characterize a covariance as well.
\begin{thm}
Take $\ww_{ij} = \mu_i\mu_j/\Vol(V)^2$ for $i\neq j$.
Then a covariance coincides with the Dirichlet energy: $C(X_s,X_t) = \CE(X_s,X_t)$.
\label{thm:bridge}
\end{thm}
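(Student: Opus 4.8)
The plan is to prove the identity by a direct computation whose engine is the separable (product) structure of the prescribed weights. First I would set $f = X_s$ and $g = X_t$ with components $f_i = X_s(i)$, $g_i = X_t(i)$, and translate both sides into explicit finite sums. Under the probability space $(V, 2^V, \mu/\Vol(V))$ the expectation of any $h \in \CA$ reads $\E[h] = \Vol(V)^{-1}\sum_{i\in V}h_i\mu_i$, so unfolding the definition of covariance gives
\begin{align*}
C(X_s,X_t) = \frac{1}{\Vol(V)}\sum_{i\in V}f_ig_i\mu_i - \frac{1}{\Vol(V)^2}\Big(\sum_{i\in V}f_i\mu_i\Big)\Big(\sum_{j\in V}g_j\mu_j\Big).
\end{align*}

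Next I would expand the Dirichlet energy $\CE(X_s,X_t) = \tfrac{1}{2}\sum_{i,j\in V}\ww_{ij}(f_i-f_j)(g_i-g_j)$ with $\ww_{ij}=\mu_i\mu_j/\Vol(V)^2$. Because the summand vanishes on the diagonal, the convention $\ww_{ii}=0$ is harmless and I may run the sum over all ordered pairs. Expanding $(f_i-f_j)(g_i-g_j)$ into its four monomials, the key step is that each resulting double sum factors through the product weight $\mu_i\mu_j$: the two ``diagonal-type'' terms collapse via $\sum_j\mu_j=\Vol(V)$ into $\Vol(V)\sum_i\mu_if_ig_i$, while the two ``cross'' terms separate into $\big(\sum_i\mu_if_i\big)\big(\sum_j\mu_jg_j\big)$. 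By the $i\leftrightarrow j$ symmetry the two terms of each type coincide, so the prefactor $\tfrac12$ exactly cancels the doubling, leaving
\begin{align*}
\CE(X_s,X_t) = \frac{1}{\Vol(V)}\sum_{i\in V}f_ig_i\mu_i - \frac{1}{\Vol(V)^2}\Big(\sum_{i\in V}f_i\mu_i\Big)\Big(\sum_{j\in V}g_j\mu_j\Big),
\end{align*}
which is identical to the expression for $C(X_s,X_t)$ above, proving the claim.

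There is no genuine obstacle here: once the product form of $\{\ww_{ij}\}$ is exploited the statement reduces to bookkeeping, and the only subtlety, namely whether the diagonal convention $\ww_{ii}=0$ affects the count, is immediate since the Dirichlet summand is zero at $i=j$ regardless of the weight value. The conceptual content worth emphasizing is that separability of the weight is precisely what converts the ``energy of differences'' into a variance-type expression, identifying the first term with $\E[X_sX_t]$ and the second with $\E[X_s]\E[X_t]$.
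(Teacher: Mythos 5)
Your proof is correct and follows essentially the same route as the paper: the paper's proof is exactly this direct computation, asserting in one step the identity $\E[X_sX_t]-\E[X_s]\E[X_t]=\tfrac{1}{2}\sum_{i,j\in V}\frac{\mu_i\mu_j}{\Vol(V)^2}(X_s(i)-X_s(j))(X_t(i)-X_t(j))$, which you verify by expanding the four monomials and using the separability of the weight. Your version simply supplies the bookkeeping (and the harmless diagonal convention) that the paper leaves implicit.
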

\begin{proof}
We have
\begin{align*}
\E[X_sX_t] - \E[X_s]\E[X_t] &= \frac{1}{2}\sum_{i, j\in V}\frac{\mu_i\mu_j}{\Vol(V)^2}\left(X_s(i) - X_s(j)\right)\left(X_t(i) - X_t(j)\right) \\
& = \CE(X_s,X_t),
\end{align*}
as required.
\end{proof}
\begin{cor}
In the above setting, we have $\lap X_s = (X_s - \E[X_s]\cdot 1_{\CA})/\Vol(V)$.
In particular, $\rho_i = \Vol(V)^{-1}$ for $i\geq 2$ in \eqref{eq:eigenfunctions}.
\label{cor:randomlap}
\end{cor}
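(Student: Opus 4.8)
The plan is to derive a single closed form for $\lap$ acting on an arbitrary function under the special weights of Theorem \ref{thm:bridge}, and then specialize it twice: once to $X_s$ for the stated identity, and once to the eigenfunctions $\{v_i\}$ for the spectrum.

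First I would substitute $\ww_{ij} = \mu_i\mu_j/\Vol(V)^2$ (with $\ww_{ii}=0$) into the explicit Laplacian \eqref{eq:laplacian}. Because $\ww_{ii}=0$, the factor $\mu_i$ in the denominator cancels and the $i$-th coefficient of $\lap f$ becomes $\Vol(V)^{-2}\sum_{j\in V}\mu_j(f_i-f_j)$, where the sum may run over all $j$. Recognizing $\sum_j \mu_j = \Vol(V)$ and $\sum_j \mu_j f_j = \int_V f\, d\mu$ from \eqref{eq:integral}, this collapses to $\Vol(V)^{-1}(f_i - m_f)$, with $m_f$ the mean from \S\ref{subsec:metric}. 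Summing over $i$, I obtain the operator identity $\lap f = \Vol(V)^{-1}(f - m_f 1_\CA)$, valid for every $f \in \CA$.

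With this identity in hand the corollary is immediate. Taking $f = X_s$ and using $m_{X_s} = \E[X_s]$ (recorded in \S\ref{subsec:random}) gives the first claim. For the eigenvalues I would feed in the eigenfunctions of \eqref{eq:eigenfunctions}: for $i \geq 2$ the orthogonality $v_i \perp 1_\CA$ forces $\int_V v_i\, d\mu = 0$, hence $m_{v_i} = 0$, so the identity reads $\lap v_i = \Vol(V)^{-1} v_i$ and therefore $\rho_i = \Vol(V)^{-1}$; the case $i=1$ recovers $\rho_1 = 0$ since $v_1$ is constant and equal to its own mean.

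There is no genuine obstacle here---the whole argument is a direct computation from the definitions already established. The one point to watch is the normalization bookkeeping: the weights carry a factor $\Vol(V)^{-2}$, and it is precisely the telescoping of the $\mu_j$-sum into one power of $\Vol(V)$ that leaves the final answer scaling like $\Vol(V)^{-1}$ rather than $\Vol(V)^{-2}$, which is what makes the common eigenvalue come out as $\Vol(V)^{-1}$.
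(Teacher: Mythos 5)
Your proposal is correct and follows essentially the same route as the paper: substituting the weights $\ww_{ij}=\mu_i\mu_j/\Vol(V)^2$ into \eqref{eq:laplacian}, collapsing the $\mu_j$-sum to obtain the operator identity $\lap f = \Vol(V)^{-1}(f - m_f\cdot 1_{\CA})$, and then specializing to $f = X_s$ and to $f = v_i$ with $v_i \perp 1_{\CA}$ for $i \geq 2$. The only difference is cosmetic bookkeeping (you track the diagonal term and the $i=1$ case explicitly, which the paper leaves implicit).
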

\begin{proof}
From \eqref{eq:laplacian}, we have
\begin{align*}
\lap f = \sum_{i,j\in V}\frac{\mu_{j}}{\Vol(V)^2}(f_i - f_j)e_i  = \frac{1}{\Vol(V)}(f - \E[f]\cdot 1_{\CA}). 
\end{align*}
For $i \geq 2$, we get $\E[v_i] = 0$, then $\lap v_i = \Vol(V)^{-1}v_i$.
\end{proof}
Therefore, if $X$ is a centered variable, then the corresponding curvature vector is $0$, and besides if variables $\{X_s\}$ are independent, then they forms orthogonal eigenfunctions of the Laplacian. 
%Moreover, if $X$ is independent, 
%Since $X$ gives an embedding, the corresponding curvature vector is nothing but a centered random variable.

\subsection{principal component}
\label{subsec:pca}
For a random variable $X\colon\Omega \rightarrow \R^d$, put $\overline{X}(\omega) := X(\omega) - \E[X]$.
%, so that $m_{\overline{X}} = 0$.
Take eigenfunctions of the covariance matrix $C_{\overline{X}}$, that is, $C_{\overline{X}}u_s = \alpha_su_s, u_s \in \R^d$ and $\alpha_1 \geq \alpha_2 \geq \cdots \geq \alpha_d$.
This is viewed as a diagonalization by the Euclidean group $E(d)$ acting on $\R^d$.
In this setting, the eigenfunction expansion of $\overline{X}(\omega)$ is given as
\begin{align}
\overline{X}(\omega) = \sum_{s=1}^d \xi_s (\omega) u_s, \text{  where } \ \xi_s(\omega) = \left\langle \overline{X}(\omega), u_s \right\rangle_{\R^d} \in \R.
\label{eq:pca}
\end{align}
The coefficient $\xi_s$ is a map $\Omega \rightarrow \R$, hence, a random variable.
A straightforward calculation shows
\begin{align}
\E[\xi_s] = 0, \ \ \E[\xi_s \xi_t] = \alpha_s\delta_{st},
\label{eq:pcaprop}
\end{align}
for $1 \leq s, t \leq d$.
This means the coefficients $\{\xi_s\}$ are not correlated each other, and their covariances decrease as the index becomes larger.
Therefore, a mapping $\Omega \ni \omega \mapsto (\xi_1(\omega), \xi_2(\omega), \cdots)$ is called \emph{a principal component analysis} (PCA), where only the first few principal terms are usually taken.

Let us give geometric interpretations for PCA by Theorem \ref{thm:bridge}.
Since the random variables $\{\xi_s\}$ are centered and independent as \eqref{eq:pcaprop}, $\{\xi_s\}$ consist of as eigenfunctions of the Laplacian given in Corollary \ref{cor:randomlap}.
Hence, PCA is regarded as a special case of the Laplacian eigenmaps \eqref{eq:lapmaps} or the embedding given in \eqref{eq:walk}.

In addition, PCA is reformulated as a problem to maximize the left hand side of
\begin{align}
\left\langle C_{\overline{X}}a, a \right\rangle_{\R^d} = \sum_{s,t=1}^d a_sa_t \CE(X_s,X_t),
\label{eq:lpp}
\end{align}
for $a = (a_1, a_2, \cdots, a_d) \in \R^d$ under the condition $\| a \|_{\R^d} = 1$.
Then the first principal term is written as  $\xi_1 = \sum_{s=1}^d a_sX_s\in \CA$.
%Namely, PCA is a problem to maximize the left hand side under the condition $\| a \|_{\R^d} = 1$.
By contrast, the right hand side means the embedding energy of one dimensional subspace in $\R^d$ given by $r = \sum_{s=1}^d a_sX_s\in \CA$.
This shows a relation between PCA and \emph{Locallity Preserving Projections} (LPP), which provides another embedding by \emph{minimizing} \eqref{eq:lpp} under the condition $\| r \|_{\CA} = 1$ \cite{he2004locality}.

\section{Application IV: many-body physics}
\label{sec:physics}
In this section, we regard embedded points in $\R^d$ as point mass in a classical many-body system.
We show the force coincides with the curvature vector by taking special weights.

\subsection{Hooke's law of spring}
Let $\{\vec{r}_d(i) \mid i \in V \}$ be $n$ points in $\R^d$ having masses $\{m_i > 0\}$.
We assume all pair of points $\left(\vec{r}_d(i), \vec{r}_d(j)\right)$ is connected by a zero-length spring with force constant $k_{ij} \geq 0$.
For simplicity, put $k_{ii}=0$ for $i\in V$.
In this case, Hooke's law states the force $\overrightarrow{F} \in \Gamma(V;\R^d)$ and the potential $U\in \R$ are respectively given as
\begin{align*}
\overrightarrow{F}(i) = \sum_{j\in V} k_{ij}(\vec{r}_d(j) - \vec{r}_d(i)), \ \ U = \frac{1}{2}\sum_{i,j \in V} k_{ij} \|\vec{r}_d(i) - \vec{r}_d(j)\|^2_{\R^d}.
\end{align*}
\begin{thm}
Take $\ww_{ij} = k_{ij}$ and $\mu_i = m_i$ for $i,j \in V$.
Then we have
\begin{align*}
\overrightarrow{F}(i) = m_i\overrightarrow{H}_d(i), \ \ U = \CE(\vec{r}_d, \vec{r}_d).
\end{align*}
\end{thm}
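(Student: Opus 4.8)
The plan is to verify both identities by direct substitution into the definitions, the key point being that the choice $\mu_i = m_i$ cancels the inverse-mass factor built into the Laplacian \eqref{eq:laplacian}. For the force identity I would first apply \eqref{eq:laplacian} to each coordinate function $r_s$, giving $\lap r_s = \sum_{i,j\in V}\frac{\ww_{ij}}{\mu_i}(r_s(i)-r_s(j))e_i$, and then read off the $i$-th component of the curvature vector $H_s = -\lap r_s$. Substituting $\ww_{ij}=k_{ij}$ and $\mu_i=m_i$ yields $H_s(i)=\frac{1}{m_i}\sum_{j\in V}k_{ij}(r_s(j)-r_s(i))$. Assembling the $d$ coordinates into $\overrightarrow{H}_d(i)$ and multiplying by $m_i$ clears the denominator and recovers exactly Hooke's force $\overrightarrow{F}(i)=\sum_{j\in V}k_{ij}(\vec{r}_d(j)-\vec{r}_d(i))$.

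For the energy identity I would start from the definition $\CE(\vec{r}_d,\vec{r}_d)=\sum_{s=1}^d\CE(r_s,r_s)$ and apply \eqref{eq:energy} to each summand, obtaining $\CE(r_s,r_s)=\frac{1}{2}\sum_{i,j\in V}\ww_{ij}(r_s(i)-r_s(j))^2$. The one manipulation that matters is interchanging the order of summation, pulling the sum over the coordinate index $s$ inside the sum over vertex pairs; since $\sum_{s=1}^d(r_s(i)-r_s(j))^2=\|\vec{r}_d(i)-\vec{r}_d(j)\|_{\R^d}^2$, the coordinatewise energies collapse into $\CE(\vec{r}_d,\vec{r}_d)=\frac{1}{2}\sum_{i,j\in V}\ww_{ij}\|\vec{r}_d(i)-\vec{r}_d(j)\|_{\R^d}^2$. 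Setting $\ww_{ij}=k_{ij}$ makes this coincide with the potential $U$.

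There is no genuine obstacle here: the argument is entirely a matter of unwinding definitions, and the only point requiring care is bookkeeping the two distinct roles of $\mu_i$ and $\ww_{ij}$. The conceptual content lies not in the computation but in the observation that, once the spring constants are identified with the graph weights and the masses with the measure, the factor $1/\mu_i$ in the Laplacian is precisely what distinguishes Newton's force from force-per-mass, while the measure-independence of the Dirichlet energy noted in \S\ref{sec:matrix2} explains why the potential depends only on the $k_{ij}$.
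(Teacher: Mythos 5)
Your proof is correct and follows the same route as the paper, which simply notes that the assertion follows immediately from \eqref{eq:laplacian}, \eqref{eq:curvature}, and \eqref{eq:energy}; you have merely written out the substitutions $\ww_{ij}=k_{ij}$, $\mu_i=m_i$ explicitly. Both computations (clearing the $1/m_i$ factor for the force, and interchanging the sums over $s$ and $(i,j)$ for the energy) are exactly what the paper's one-line proof leaves implicit.
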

\begin{proof}
The assertion immediately follows from \eqref{eq:laplacian}, \eqref{eq:curvature}, and \eqref{eq:energy}.
\end{proof}
Its equations of motion is represented as
\begin{align*}
m_i\frac{d^2}{dt^2}\vec{r}_d(i) = \overrightarrow{F}(i) = -m_i\lap \vec{r}_d(i).
\end{align*}
Since the eigenfunctions of $\lap$ does not depend of $\vec{r}_d$, we can see the $i$-th Fourier coefficient has frequency $\rho_i$:
\begin{align*}
\frac{d^2}{dt^2}\CF[\vec{r}_d]_i = \frac{d^2}{dt^2}\langle \vec{r}_d, v_i\rangle_{\CA} = -\rho_i\langle \vec{r}_d, v_i\rangle_{\CA} = -\rho_i\CF[\vec{r}_d]_i.
\end{align*}
If we fix some points $A \subset V$, the stable positions of other free points $A^c$ are calculated by solving the Dirichlet problem
\begin{align}
\begin{cases}
\lap \vec{r} (i) = 0, \text{ for } i \in A^c, \\
\vec{r}|_A = \vec{r}_d|_A.
\end{cases}
\label{eq:dirichlet}
\end{align}
This idea is used to obtain a smooth surface in point cloud processing \cite{vollmer1999improved}.
The Dirichlet problem for a function is considered in random walks and machine learning, and applied to image processing \cite{grady2006random} and semi-supervised learning in \cite{zhu2003semi} respectively.

\subsection{Newton's law of gravitation}
\label{subsec:newton}
Now, we assume $d \geq 3$.
Let $\{\vec{r}_d(i) \mid i \in V \}$ be $n$ points in $\R^d$ as before.
The gravitational potential $\Phi \in \CC(\R^d)$ is considered to follow the Poisson equation in $\R^d$ with a boundary condition at infinity,
and its general solution gives
\begin{align*}
\Phi_i(\vec{r}) = -C\sum_{j \in V \setminus i} \frac{Gm_j}{\| \vec{r} - \vec{r}_d(j)\|^{d-2}_{\R^d}}.
\end{align*}
Here, $\Phi_i$ means the gravitational potential around $\vec{r} = \vec{r}_d(i)$ caused by $\{\vec{r}_d(j) \mid j \in V \setminus i \}$.
The corresponding gravitational field $\vec{g} \in \Gamma(V; \R^d)$ are given by $\vec{g}(i) = - \overrightarrow{\nabla}_{\R^d} \Phi_i|_{\vec{r}=\vec{r}_d(i)}$, hence we have
\begin{align}
\vec{g}(i) = -C(d-2)\sum_{j\in V\setminus i} \frac{Gm_j(\vec{r}_d(i) - \vec{r}_d(j))}{\| \vec{r}_d(i) - \vec{r}_d(j) \|^{d}_{\R^d}}.
\label{eq:field}
\end{align}
The gravitational potential energy $U \in \R$ is described as
\begin{align*}
U = \frac{1}{2}\sum_{i\in V} m_i\Phi_i(\vec{r}_d(i)) = -\frac{C}{2}\sum_{i, j \in V, i\neq j} \frac{Gm_im_j}{\| \vec{r}_d(i) - \vec{r}_d(j)\|^{d-2}_{\R^d}}.
\end{align*}
These physical concepts defined on $\R^d$ are directly described as those on $V$.

\begin{thm}
Take $\ww_{ij} = CGm_im_j/\| \vec{r}_d(i) - \vec{r}_d(j)\|^{d}_{\R^d}$ for $i, j \in V, i \neq j$ otherwise $0$, and $\mu_i = m_i$.
We have
\begin{align*}
\vec{g}(i) = (d-2)\overrightarrow{H}_d(i), \ \ U = -\CE(\vec{r}_d, \vec{r}_d).
\end{align*}
\label{thm:newton}
\end{thm}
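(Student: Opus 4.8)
The plan is to verify both identities by direct substitution of the prescribed weights $\ww_{ij} = CGm_im_j/\|\vec{r}_d(i)-\vec{r}_d(j)\|^d_{\R^d}$ and measure $\mu_i = m_i$ into the definitions of the curvature vector and embedding energy, exactly as in the preceding theorem on Hooke's law where the assertion followed immediately from \eqref{eq:laplacian}, \eqref{eq:curvature}, and \eqref{eq:energy}.

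For the first identity, I would start from $\overrightarrow{H}_d(i) = -\lap \vec{r}_d(i)$ in \eqref{eq:curvature} and expand the Laplacian componentwise via \eqref{eq:laplacian}, giving
\begin{align*}
\overrightarrow{H}_d(i) = -\sum_{j\in V}\frac{\ww_{ij}}{\mu_i}(\vec{r}_d(i) - \vec{r}_d(j)).
\end{align*}
Substituting the weight and $\mu_i = m_i$, the factor $m_i$ in the numerator of $\ww_{ij}$ cancels the measure, leaving $-C\sum_{j\in V\setminus i}Gm_j(\vec{r}_d(i)-\vec{r}_d(j))/\|\vec{r}_d(i)-\vec{r}_d(j)\|^d_{\R^d}$. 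Comparing this against the expression for $\vec{g}(i)$ in \eqref{eq:field}, which carries an extra factor $(d-2)$, yields $\vec{g}(i) = (d-2)\overrightarrow{H}_d(i)$.

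For the second identity, I would expand $\CE(\vec{r}_d, \vec{r}_d) = \sum_{s=1}^d \CE(r_s, r_s)$ using \eqref{eq:energy} and collect the sum over coordinates into the squared Euclidean norm,
\begin{align*}
\CE(\vec{r}_d, \vec{r}_d) = \frac{1}{2}\sum_{i,j\in V}\ww_{ij}\|\vec{r}_d(i)-\vec{r}_d(j)\|^2_{\R^d}.
\end{align*}
The key algebraic step is that the factor $\|\vec{r}_d(i)-\vec{r}_d(j)\|^2$ partially cancels the $\|\vec{r}_d(i)-\vec{r}_d(j)\|^d$ in the denominator of $\ww_{ij}$, reducing the exponent from $d$ to $d-2$ and producing $\frac{C}{2}\sum_{i\neq j}Gm_im_j/\|\vec{r}_d(i)-\vec{r}_d(j)\|^{d-2}_{\R^d}$, which is precisely $-U$ by the formula for the gravitational potential energy stated just above.

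Since both identities reduce to bookkeeping of exponents and the constant $(d-2)$, I do not anticipate a genuine obstacle; the only point requiring care is matching the power $d$ in the weight against the power $d-2$ in the potential, which is exactly the feature encoding the dimension-$d$ Poisson kernel of $\R^d$ into these graph weights.
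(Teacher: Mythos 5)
Your proposal is correct and follows essentially the same route as the paper's own proof: direct substitution of the prescribed weights and measure into \eqref{eq:laplacian}, \eqref{eq:curvature}, and \eqref{eq:energy}, with the $m_i$ cancellation and comparison against \eqref{eq:field} and the potential energy formula. The exponent bookkeeping ($\|\vec{r}_d(i)-\vec{r}_d(j)\|^2$ against $\|\vec{r}_d(i)-\vec{r}_d(j)\|^d$ yielding $d-2$) is exactly the content of the paper's two-line computation.
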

\begin{proof}
Under the assumption, for $i \in V$ we have
\begin{align*}
-(d-2)\lap \vec{r}_d(i) &= -(d-2)\sum_{j\in V\setminus i} \frac{\ww_{ij}}{m_i}(\vec{r}_d(i) - \vec{r}_d(j)) = \vec{g}(i), \\
\CE(\vec{r}_d, \vec{r}_d) &= \frac{1}{2}\sum_{i,j\in V, i\neq j}\ww_{ij}\|\vec{r}_d(i) - \vec{r}_d(j)\|^2_{\R^d} = -U,
\end{align*}
as required.
\end{proof}
Remark that the above weights depend on positions $\vec{r}_d$ unlike the case of Hooke's law.
Hence, to compute the stable positions, we cannot use direct calculation like \eqref{eq:dirichlet}.
In this situation, we can use an iterative method like \eqref{eq:flow} with variable weights, which is viewed as \emph{the mean shift algorithm} \cite{comaniciu2002mean} in machine learning.

In these case, weights are given as physical constants defining the system, and the curvature vector \eqref{eq:curvature} indicates the force (up to constant) in a totally discrete manner, like general relativity in a continuous setting. 

\begin{rem}
From a similar viewpoint, the Coulomb potential is studied as weights of a kernel by \cite{hochreiter2003coulomb}, and their exponent is different from that in Theorem \ref{thm:newton}.
\end{rem}

\section{Application V: signal processing}
\label{sec:signal}
In this section, we review several harmonic techniques used in signal processing.
Here, the Fourier transform and the curvature vector play important roles.

\subsection{filtering}
\label{subsec:filtering}
The convolution operator given in \eqref{eq:convolution} is viewed to weight the Fourier coefficients of $f$ by those of $g$.
This can be generalized as a filtering on a frequency domain.
In other words, the eigenvalues $\{ \rho_i \}$ are understood as the frequencies, thus, with a continuous function $g\colon \R_{\geq 0} \rightarrow \R$, we define a filtering operator $T_g:= g(\lap)\colon \CA \rightarrow \CA$ by
\begin{align}
T_g f := \sum_{i=1}^n \CF[f]_i g(\lap)v_i = \sum_{i=1}^n g(\rho_i)\CF[f]_i v_i.
\label{eq:filter}
\end{align}
The operator $Q_t = \exp(-t\lap/c)$ given in Proposition \ref{prop:heateq} is also understood in this sense.
In particular, a low-pass filter is realized by taking a decreasing function $g$ such that $g(0) = 1$ and $\lim_{x\rightarrow \infty}g(x) = 0$.
Practically, it is convenient to approximate $g$ with the Chebyshev polynomials to avoid calculating the eigenfunctions \cite{shuman2011chebyshev, hammond2011wavelets}.
The filtering operator for an embedding $\vec{r}_d$ is proposed as \emph{the manifold harmonic transform} \cite{vallet2008spectral} for analyzing point clouds, which is described as $\vec{r}_3 \in \CA^{\oplus 3}$.
In order to obtain useful functions on a graph, the filtering operators \eqref{eq:filter} is modified to add learnable parameters in the graph deep learning \cite{bruna2013spectral, henaff2015deep}.
In another context, the filtering operator is extended to define the graph wavelet transform \cite{hammond2011wavelets}.

\subsection{smoothing}
\label{subsec:smoothing}
In order to construct useful filtering operators, let us perturb an embedding $\vec{r}_d$ to decrease the embedding energy $\CE(\vec{r}_d,\vec{r}_d)$.
\begin{align}
{\vec{r}\,}_d^{(k+1)} =  {\vec{r}\,}_d^{(k)} - \epsilon \lap{\vec{r}\,}_d^{(k)} = {\vec{r}\,}_d^{(k)} + \epsilon {\overrightarrow{H}\,}_d^{(k)}.
\label{eq:flow}
\end{align}
This is seen as the discrete time heat equation \eqref{eq:discretehe}, or the explicit Euler-scheme for the continuous time heat equation \eqref{eq:continuoushe}.
From a geometric viewpoint, it corresponds to the mean curvature flow \cite{brakke1978motion}.
Here, we remark that weights $\{\ww_{ij}\}$ are usually fixed during the iterations even if they depend on the initial ${\vec{r}\,}_d^{(0)} = \vec{r}_d$.

Since iterations of \eqref{eq:flow} asymptotically lead to a constant embedding, which means just one point, it is important to prevent from shrinking.
Some improving methods are proposed, for example, using the second ordered Laplacian combining with a growing up process \cite{taubin1995signal}, considering the implicit Euler-scheme alternatively \cite{hein2007manifold}, or both \cite{desbrun1999implicit};
\begin{align*}
{\vec{r}\,}_d^{(k+1)} =
\begin{cases}
\left( \id - \epsilon' \lap \right)\left( \id - \epsilon \lap \right){\vec{r}\,}_d^{(k)}, \\
{\vec{r}\,}_d^{(k)} - \epsilon \lap {\vec{r}\,}_d^{(k+1)}, \\
{\vec{r}\,}_d^{(k)} - \epsilon \lap^2 {\vec{r}\,}_d^{(k+1)},
\end{cases}
\end{align*}
where $\epsilon>0$ and $\epsilon + \epsilon' < 0$.
They are described respectively as $g(t) = (1-\epsilon' t)(1-\epsilon t), (1- \epsilon t)^{-1}$ and $(1- \epsilon t^2)^{-1}$ in terms of \eqref{eq:filter}, and then behave like low-path filters.
Other geometric flows are also well studied in \cite{xu2006discrete}.

Nowadays, image filtering is also understood in this context:
\begin{align*}
f' = f - \frac{1}{c}\lap f = S_{c}f
\end{align*}
as in \eqref{eq:integralop}.
The weights $\{ \theta_{ij} \}$ are usually decided by pixel's location and values \cite{milanfar2012tour}.

\section{Application VI: manifold learning}
\label{sec:machinelearning}
In machine learning, dimension reduction techniques based on graphs are sometimes called \emph{manifold leaning}.
In this section, we review some of them from viewpoints of the graph cut problem and the energy minimization.

\subsection{graph cut problem}
\label{subsec:graphcut}
The graph cut problem aims to find clusters in graph by minimizing several cutting loss functions \cite{von2007tutorial}.
They have a similar form to the isoperimetric constant \eqref{eq:isoperimetric} and are closely related with the eigenvalue problem as discussed in \S \ref{sec:spectral}.

In this paper, we set the problem as minimization of the following function
\begin{align*}
\text{Loss}_{\text{GC}}\left(\{\chi_{A_l}\}_{l=1}^{k}\right) := \sum_{l=1}^{k} \frac{\CE(\chi_{A_l}, \chi_{A_l})}{\| \chi_{A_l} \|^2_\CA},
\end{align*}
where $\chi_{A_l}=\sum_{i\in A_l} e_i$ and $\{A_l\}_{l=1}^{k}$ is a $k$-partition of $V$, which satisfies $\sqcup_{l=1}^{k} A_l = V$ (disjoint union) and $A_l \neq \emptyset$ for any $l$.
By \eqref{eq:mercer}, we have
\begin{align*}
\langle \chi_{A_l}, S_{c}\chi_{A_l}\rangle_{\CA} = \sum_{i=1}^n \lambda_i \langle v_i, \chi_{A_l} \rangle_{\CA}^2 = \sum_{i=1}^n \lambda_i \iint_{A_l\times A_l} v_i(x)v_i(y) d\mu_xd\mu_y,
\end{align*}
hence $\lap = c\id - cS_c$ and $\|v_i\|_{\CA} = 1$ lead to
\begin{align*}
\sum_{l=1}^{k} \frac{\CE (\chi_{A_l}, \chi_{A_l})}{\| \chi_{A_l}\|^2_\CA} = \sum_{i=1}^n\rho_i + \sum_{l=1}^k \sum_{i=1}^n  \frac{c\lambda_i}{2\Vol(A_l)}\iint_{A_l\times A_l} (v_i(x) - v_i(y))^2  d\mu_xd\mu_y.
\end{align*}
Furthermore, in general, we get
\begin{align*}
\frac{1}{2}\iint_{A\times A} \left( f(x) - f(y)\right)^2 d\mu_x d\mu_y & = \frac{1}{2}\iint_{A\times A} \left( (f(x) - m_{f,A}) - (f(y) - m_{f,A})\right)^2 d\mu_x d\mu_y \\
& \ \ + \iint_{A\times A} \left( f(x) - m_{f,A} \right) \left( f(y) - m_{f,A} \right) d\mu_x d\mu_y \\
%& = \frac{1}{2}\iint_{A\times A} \left( (f(x) - m_{f,A})^2 + (f(y) - m_{f,A})^2\right) d\mu_x d\mu_y \\
& = \Vol(A) \int_{A} \left( f(x) - m_{f,A} \right)^2 d\mu_x,
\end{align*}
for any $f \in \CA$ and $m_{f,A} := \Vol(A)^{-1}\int_A f(x) d\mu_x \in \R$. 
Therefore, by taking large $c$, we can obtain the minimum partition by the $k$-means algorithm for
\begin{align*}
\left\{\left(\sqrt{\lambda_1}v_1(x), \sqrt{\lambda_2}v_2(x), \cdots, \sqrt{\lambda_n}v_{n}(x)\right)\in \R^n \mid x\in V\right\},
\end{align*}
which is nothing but \emph{the kernel $k$-means algorithm} \cite{scholkopf1998nonlinear}.
This fact is first shown by \cite{dhillon2004kernel}.
On the other hand, the graph cut problem is often translated into the $k$-means on the Laplacian eigenmaps \eqref{eq:lapmaps}, which is known as \emph{spectral clustering} \cite{von2007tutorial}.

In both cases, mapping methods play an essential role, and such methods are researched as dimension reduction.
We already explained its examples; the commute time distance embedding \eqref{eq:walk}, PCA \eqref{eq:pca}, and LPP \eqref{eq:lpp}.
LLE we discuss in the following subsection is also one of examples.

\subsection{weight learning}
\label{subsec:weightlearning}

The algorithms in the previous subsection are highly dependent on choice of weights $\{\ww_{ij}\}$, which determines eigenvalues and eigenfunctions.
In order to avoid its trial-and-error process, several researches propose methods to learn weights.
Their basic idea is to minimize an energy function with respect to weights under some assumptions.

In the Locally Linear Embedding (LLE) \cite{roweis2000nonlinear}, weights are determined to minimize the energy function $\sum_{i\in V}\|\vec{r}_d(i) - \sum_{j\in V}\tilde{\ww}_{ij} \vec{r}_d(j)\|_{\R^d}^2$ under the condition $\sum_{j\in V} \tilde{\ww}_{ij}  = 1$.
For these weights $\{\tilde{\ww}_{ij}\}$, we can take $\mu_i = \deg(i) = 1$ for all $i\in V$.
Hence, the energy function is equivalent to the length of the curvature vector for the embedding:
\begin{align*}
\sum_{i=1}^n \Bigg\| \vec{r}_d(i) - \sum_{j\in V}\tilde{\ww}_{ij}\vec{r}_d(j) \Bigg\|_{\R^d}^2 = \sum_{s=1}^{d} \sum_{i=1}^n \left( -(\lap r_s)_i \right)^2 = \sum_{s=1}^{d} \| H_s \|^2_\CA. 
\end{align*}
Besides, the LLE requires minimizing the embedding cost function $\sum_{i\in V}\|\vec{y}_{d'}(i) - \sum_{j\in V}\tilde{\ww}_{ij}$ $\vec{y}_{d'}(j)\|_{\R^{d'}}^2=\sum_{l=1}^{d'} \| \lap y_l \|_{\CA}^2$ for $d' < d$ under the condition $\langle y_l, y_{l'} \rangle_{\CA} = \delta_{ll'}$, which gives the Laplacian eigenmaps \eqref{eq:lapmaps} again.

\section{Conclusion}
We introduced a formulation based on a universal differential calculus and differential geometry, and explained several frameworks to analyze discrete points.
These demonstrations would show our formulation has a potential to understand various discrete frameworks and develop new harmonic techniques by combining graph theory, probability theory, spectral geometry, and topological techniques \cite{wasserman2018topological}.

\section*{Acknowledgement}
The author would like to thank Satoshi Takahashi, Tetsuya Koike, Yosuke Otsubo, Chikara Nakamura for useful discussion and constant encouragement.
He is also grateful to Bausan Yuan, Ping-Wei Chang, Shruthi Kubatur, Henry Chau and Pranav Gundewar for their advices.

\small

\bibliographystyle{plain}

%\bibliography{reference}

\end{document}